\documentclass[aps,pra,amsmath,notitlepage,nofootinbib,twocolumn,superscriptaddress,noeprint]{revtex4-2}
\usepackage{amsmath,amsfonts,amssymb,amsthm}
\usepackage{mathtools}
\usepackage[pdftex]{graphicx}
\usepackage[usenames, dvipsnames, svgnames, table]{xcolor}
\usepackage[colorlinks=true, citecolor=Blue, linkcolor=BrickRed, urlcolor=Brown]{hyperref}
\usepackage{txfonts}
\usepackage{mathrsfs}
\usepackage{bm}
\usepackage{multirow}
\usepackage{braket}
\usepackage{import}
\usepackage[ruled, noline, noend]{algorithm2e}
\usepackage{array}
\usepackage{comment}
\usepackage{tikz}
\usetikzlibrary{quantikz}
\usepackage{graphicx}


\newtheorem{theorem}{Theorem}
\newtheorem{lemma}{Lemma}

\theoremstyle{definition}
\newtheorem{definition}{Definition}
\newcommand{\be}{\begin{equation}}
\newcommand{\ee}{\end{equation}}
\newcommand{\ben}{\begin{eqnarray}}
\newcommand{\een}{\end{eqnarray}}
\newcommand{\bes}{\begin{subequations}}
\newcommand{\ees}{\end{subequations}}
\newcommand{\bF}{\begin{figure}}
\newcommand{\eF}{\end{figure}}

\DeclareMathOperator{\tr}{Tr}

\newcommand{\orcid}[1]{\href{https://orcid.org/#1}{\includegraphics[height = 2ex]{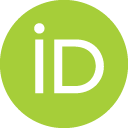}}}

\begin{document}

\title{Accreditation of Analogue Quantum Simulators}

\author{Andrew Jackson \orcid{0000-0002-5981-1604}}
\thanks{AJ and TK have contributed equally.}
\email{Andrew.Jackson.1@warwick.ac.uk}
\affiliation{Department of Physics, University of Warwick, Coventry CV4 7AL, United Kingdom}
  
\author{Theodoros Kapourniotis \orcid{0000-0002-6885-5916}}
\email{T.Kapourniotis@warwick.ac.uk }
\affiliation{Department of Physics, University of Warwick, Coventry CV4 7AL, United Kingdom}

\author{Animesh Datta \orcid{0000-0003-4021-4655}}
\email{animesh.datta@warwick.ac.uk}
\affiliation{Department of Physics, University of Warwick, Coventry CV4 7AL, United Kingdom}


\begin{abstract}
   
   We present an accreditation protocol for analogue, i.e., continuous-time, quantum simulators.
   For a given simulation task, it provides an upper bound on the variation distance between the probability distributions at the output of an erroneous and error-free analogue quantum simulator.
   As its overheads are independent of the size and nature of the simulation, the protocol is ready for immediate usage and practical for the long term.
   It builds on the recent theoretical advances of strongly universal Hamiltonians and quantum accreditation as well as 
   experimental progress towards the realisation of programmable hybrid analogue–digital quantum simulators.

\end{abstract}

\date{\today}

\maketitle

\section{Introduction}

Quantum simulation is rapidly emerging as a leading application of quantum technology~\cite{Altman2021}.
One key approach is analogue simulation,
which proceeds by engineering many-body quantum systems in a well-controlled environment and simply allowing their dynamics to occur.
As these systems increase in size and improve in performance, their computational capabilities are beginning to surpass those of existing classical computers~\cite{Bernien2017,Leseleuc2018}. 
Despite improvements, they continue to be afflicted by errors.
It is thus accepted that before analogue quantum simulators can tackle problems of  practical or fundamental importance, 
 methods to provide quantitative guarantees on the correctness of the outputs of error-prone analogue quantum simulators must be developed~\cite{Hauke_2012CanTrust} .

Validation of analogue quantum simulators has typically relied on tractable theoretical models incorporating errors and imperfections~\cite{Altman2021}.
Another method has been to run the dynamics forward and backward for equal amounts of time which returns the system to its initial state - should there be no errors. 
Commonly known as Loschmidt echo, this method can detect some errors and imperfections, 
but cannot provide quantitative guarantees on the correctness of the outputs.
More sophisticated variations have been developed that evolve the simulator
from some known initial state through a closed loop in state space, eventually returning to its initial state~\cite{Shaffer2021}.
These provide some measure of how faithfully the simulator implements the target Hamiltonian.
Methods such as randomised benchmarking have also been developed for analogue quantum simulators to quantify the performance of their components~\cite{2020AnalogueBenchmarking}.
However, these methods cannot provide quantitative guarantees on the correctness of the simulator's outputs either.

In this paper, we present a scalable and practical accreditation protocol that provides an upper bound on the correctness of the outputs of an analogue quantum simulator.
As the outputs of all quantum simulators are classical probability distributions, 
our protocol places an upper bound on the variation distance between the probability distributions generated by 
erroneous and error-free  analogue  quantum simulator.
We dub this task quantum accreditation.

Our protocol eliminates the need of classical simulations, thus freeing us to accredit simulations of arbitrarily large systems where quantum simulators offer the most. 
It is sensitive to a wide class of error processes affecting real-world analogue quantum simulators.
It can be implemented on extant programmable hybrid analogue-digital quantum simulators~\cite{Bluvstein2022}.
Our work can thus be construed to solve the open problem of verifiability for analogue quantum simulators by exploiting advances in their programmability~\cite[Sec. V]{Altman2021}.

The two obstacles to the quantum accreditation of analogue quantum simulators lie in the analogue and quantum natures of the problem. 
The former engenders a variety of Hamiltonians that such simulators can and do implement
~\cite{Nguyen2018,Brown2019, Nichols2019,Chiu2019,Browaeys2020,Christakis2023}, and 
starkly contrasts with the mathematical formulation of universal digital quantum computation~\cite{nielsen_chuang_2010}.
This has been a barrier to a general recipe to bound the correctness of outputs of analogue simulators.
The latter is the well-acknowledged exponential cost of simulating a general interacting many-body quantum system classically.

Our protocol overcomes the lack of universality plaguing analogue quantum simulators using the recently developed notion of universal 
quantum Hamiltonians~\cite{Cubitt9497} and their strong counterpart~\cite{zhou2021strongly}.
To overcome the latter obstacle, our protocol builds upon trap-based quantum interactive proof systems~\cite{PhysRevA.96.012303, 2018Theos, kapourniotis2022unifying}.
These have already been used to develop a scalable and practical accreditation protocol for the quantum circuit model~\cite{2019Sam,2021Sams}.
It implements the simulation of interest - the `target', together with a number of classically easy `trap' simulations of the same size as the target.
As they are implemented on the same hardware, they are subject to the same errors and the outputs of the traps imply an upper bound on the correctness of the target simulation.
The number of trap computations is independent of the size and nature of the target simulation, 
depending rather on the accuracy and confidence with which the upper bound is sought.

\section{Definitions}
\label{prelims}

We begin with a formal definition of analogue quantum simulators.

\begin{definition}
\label{AQSDef}

    An analogue quantum simulator takes as inputs:

    \begin{enumerate}
   \item The description of an initial product state $\ket{\psi_0}$,
     \item  A time-independent Hamiltonian, $\mathcal{H}_0$,
   \item A simulation duration, $t \in \mathbb{R}$, and
\item  A set of single-qubit measurements, $\mathcal{M}$.
\end{enumerate}

    The simulator prepares $\ket{\psi_0}$, applies the time evolution generated by $\mathcal{H}_0$ to $\ket{\psi_0}$ for the duration $t$, 
    followed by  the measurements in $\mathcal{M}$ and returns their results. These measurement outcomes will be samples from a distribution with probability measure, $P: \Omega \rightarrow [0,1]$, where $\Omega$ is the set of all possible outcomes of the measurements in $\mathcal{M}$.
\end{definition}

The accreditation of analogue quantum simulators relies on two recent advances -- one theoretical and one experimental.

The first, theoretical, advance is the notion of strongly universal Hamiltonians~\cite{zhou2021strongly}, which builds on the idea that the physics of any 
$\mathcal{O} \big( 1 \big)$-local quantum many-body system can be `simulated' by families of `universal' spin-lattice models~\cite{Cubitt9497}.  

\begin{definition}
A family of Hamiltonians is strongly universal if the eigenspectrum of any $O(1)$-local Hamiltonian can be encoded in some low-energy subspace of a Hamiltonian in the family; allowing the Hamiltonian from the family to simulate the $O(1)$-local Hamiltonian.
Moreover the translation from any $O(1)$-local Hamiltonian to one in the strongly universal family takes time at most polynomial in any relevant parameter
such as the number of qubits\footnote{Following Ref.~\cite{zhou2021strongly}, we use `qubits' instead of `sites'.} or interaction strength, 
and outputs a Hamiltonian  where any of these parameters are increased at most polynomially (in their original values).
\end{definition}

From all the possible families of strongly universal Hamiltonians, we choose that of XY-interactions on a square lattice, with freely varying coefficients~\cite{Cubitt9497}.
We focus on this Hamiltonian as its semi-translationally invariant nature~\cite{zhou2021strongly}
enables us to develop our accreditation protocol for a single form of interaction $(X_i X_j + Y_i Y_j)$, where $X_i, Y_i$ denote 
the Pauli X and Y operators respectively on qubit $i$.
Consequently, we dub the Hamiltonian of the XY-interaction on a square lattice `accreditable' in the rest of the paper.

\begin{definition}
\label{accreditableHamDef}

The family of accreditable Hamiltonians captures the XY-interactions on a square lattice, and has Hamiltonians of the form:
\begin{align}
    \label{accreditableHamiltonianDefEqn}
        \mathcal{H} =
        \sum_{\langle i,j \rangle} \bigg( J_{i,j} \bigg[ X_i X_j + Y_i Y_j \bigg] \bigg),
\end{align}
where $\langle i,j \rangle$ denotes the summation is over pairs of indices labelling qubits that are neighbours on the appropriately sized square lattice and $\forall i,j \in \mathbb{Z}, J_{i,j} \in \mathbb{R}$.
\end{definition}

Due to the strong universality of accreditable Hamiltonians, our protocol can be used to efficiently accredit any analogue quantum simulation
after translating it to the former using the constructive method in Ref. \cite{zhou2021strongly}.
This translation is efficient, but approximate with its precision captured by the spectral norm of the difference of the two Hamiltonians.
This is lack of precision is independent of the analogue quantum simulator on which the accreditable Hamiltonian is subsequently executed
and any errors afflicting its physical implementation.
The translation - encoding and decoding operations itself, when implemented on an analogue quantum simulator is subject to its errors.
This is accounted for in our protocol in Sec.~\ref{AccForStoc}.

The second, experimental, advance is the ability to apply single-qubit and two-qubit operations mid-simulation~\cite{Bluvstein2022}, which
 can be thought of as `single-qubit and two-qubit gates'. We formalise this in the definition of a hybrid quantum simulator.

\begin{definition}
\label{HQSDef}

    A hybrid quantum simulator (HQS) takes the four inputs of the analogue quantum simulator in Definition~\ref{AQSDef} and
    \begin{enumerate}
    \setcounter{enumi}{4}
    \item an ordered set, $G$, of single-qubit or two-qubit quantum gates with corresponding time-stamps $\{t_g \vert g \in G \}$ denoting when they are applied. 
    \end{enumerate}

The HQS prepares $\ket{\psi_0}$, applies the time evolution generated by $\mathcal{H}_0$ to $\ket{\psi_0}$ for the duration $t$, 
with interruptions to apply each gate $g$ in $G$ at time $t_g$, followed by  the measurements in $\mathcal{M}$ and returns their results. 
These measurement outcomes will be samples from a distribution with probability measure, $P: \Omega \rightarrow [0,1]$, where $\Omega$ is the set of all possible outcomes of the measurements in $\mathcal{M}$.
\end{definition}

Our accreditation protocol can be implemented on the state-of-the-art HQSs already extant.
Instances include quantum simulators where the XY-interaction Hamiltonians are directly implementable in experimental systems, most notably using Rydberg atoms~\cite{Browaeys2020} are embracing the ability to perform digital gates alongside analogue simulations~\cite{Bluvstein2022}.
Recently, hybrid models of digital-analogue quantum computation have also been studied theoretically to investigate their computational capabilities~\cite{Parra_Rodriguez_2020}.

    \begin{figure*}[t]
    \centering
 \begin{quantikz}[row sep=0.3cm]
\lstick{$\ket{0}$} & \gate{A_1} & \gate[wires=5, nwires=4]{\mathcal{U}_1} &\qw & \gate[wires=5, nwires=4]{e^{-i \mathcal{H} t / 2}} & \gate{B_1} & \gate[wires=5, nwires=4]{e^{-i \mathcal{H} t / 2}} & \gate{C_1} & \gate[wires=5, nwires=4]{\mathcal{U}_2} & \gate{D_1} & \meterD{Z}\\
\lstick{$\ket{0}$} & \gate{A_2} & & \qw &  &  \gate{B_2} & & \gate{C_2} & &\gate{D_2} &  \meterD{Z}\\
\lstick{$\ket{0}$} & \gate{A_3} && \qw & &  \gate{B_3} && \gate{C_3} & &\gate{D_3} &  \meterD{Z}\\
\vdots & \vdots &  \vdots & & \vdots & \vdots &  & \vdots &  \vdots & \vdots\\
\lstick{$\ket{0}$} & \gate{A_N} && \qw &  &  \gate{B_N} & &\gate{C_N}&  & \gate{D_N} &  \meterD{Z}
\end{quantikz}
    \caption{Circuit representation of a HQS:
     $N$ is the number of qubits and $\mathcal{U}_1$, $\mathcal{U}_2$ are arbitrary $\mathrm{poly}(N)$-sized circuits of single and two qubit gates.
    Each $A_j$, $B_j$, $C_j$, $D_j$ (for any $j \in \mathbb{N}^{\leq N}$) is an arbitrary single-qubit gate. The input is fixed to  $\ket{\psi_0}=\ket{0}^{\otimes N}$ for convenience. 
    As the single-qubit gates $A_j$ are arbitrary, 
    any product state can be prepared as an input to the HQS. Similarly as the single-qubit gates $D_j$ are arbitrary,
     any product measurement can be performed on the output of the HQS.}
    \label{fig:generalFormForErrorfulSinglesRepeat1}
\end{figure*}
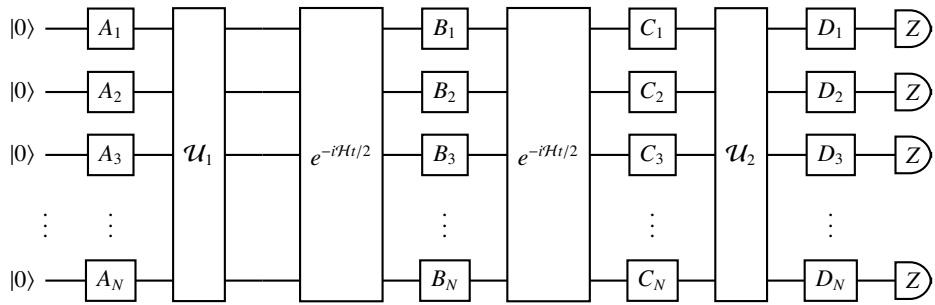

For the rest of the paper we depict the operations in a HQS, including both the Hamiltonian evolutions and the gates, via circuits, as illustrated in Fig.~\ref{fig:generalFormForErrorfulSinglesRepeat1}. 
It is important to note that our `circuits' do not necessarily fit within the limits of the quantum circuit model: The set of allowed operators in the HQS cannot be encapsulated by a finite gate-set as it contains time evolutions of any permissible Hamiltonian for arbitrary time.

\section{Accredited analogue quantum simulation}
\label{AccForStoc}

All physical implementations of quantum simulators will be afflicted by error. 
These errors can be considered to occur in one (or more) of the inputs listed in Definitions \ref{AQSDef} and \ref{HQSDef} being implemented incorrectly 
or affected by noise.
For instance, there could occur some fluctuations in the applications of the Hamiltonian $\mathcal{H}_0$
or an error in the value of $t$ applied.
Consequently, the outputs actually obtained will be erroneous. 
Thus the probability measure over $\Omega \ni s$, $\Tilde{P}(s)$, in the actual, erroneous case differs from the error-free probability measure, $P(s)$.

The objective of an accredited analogue quantum simulation is to provide -- 
in addition to the output $\Tilde{P}(s)$ -- an experimentally accessible upper-bound on the distance between $\Tilde{P}(s)$ and $P(s)$, 
as captured by the total variation distance, defined in Eqn.~\eqref{eq:tvd} in the following definition.

\begin{definition}
\label{AAQSDef}

An accredited analogue quantum simulation runs on a HQS. 
It takes the four inputs of the analogue quantum simulator in Definition~\ref{AQSDef} and two parameters $\alpha, \theta \in [0,1)$. It returns the outputs of the analogue quantum simulator, and an $\epsilon \in [0,1]$, such that:
\be
\label{eq:tvd}
      \mathrm{VD} \big( P, \Tilde{P} \big)
        =
        \dfrac{1}{2} \sum_{s \in \Omega} \bigg \vert P \big( s \big) - \Tilde{P}\big( s \big) \bigg \vert
        \leq
        \epsilon, 
       \ee
where the $\epsilon$ is obtained from the experimentally estimated $\Tilde{P}\big( s \big)$ with accuracy $\theta$ and confidence $\alpha$. 
\end{definition}

\subsection{Error model}
\label{ErrorModelSubsection}
We model any erroneous implementation of \emph{any part} of a HQS
as their error-free implementation followed (or preceded) by an error operator such that:
\begin{itemize}
    \item[E1.] The error operator is a completely positive trace preserving (CPTP) map applied on the HQS and its environment. 	
    \item[E2:] The error operator in distinct uses of a HQS, where each use only differs in the single-qubit gates, is independent and identically distributed (i.i.d). 	
    \item[E3:] Replacing identity gates with single-qubit gates in a simulation does not reduce the variational distance between $P \big( s \big)$ and $ \Tilde{P}\big( s \big)$. 
\end{itemize}

Our error model captures a large class of errors that afflict real-world analogue quantum simulators.
These include spontaneous emission, crosstalk, and particle loss via E1, and fast noise~\cite{Shaffer2021} such as laser fluctuations via E2. 
More generally, E2 captures all fluctuations in the HQS and its environment that occur on timescales faster than that implements the operations in Fig.~\ref{fig:generalFormForErrorfulSinglesRepeat1}.
Other common practical issues such as miscalibrations~\cite{Shaffer2021} in the duration of time evolutions or coefficients of the Hamiltonian being applied, or
unintended terms in the Hamiltonian, and incorrect state-preparation or measurement that occur repeatedly across multiple implementations of Fig.~\ref{fig:generalFormForErrorfulSinglesRepeat1} are captured as well, subject to E3.

Our error model does not capture slow noise~\cite{Shaffer2021} processes such as those from temperature variations or 
degradation of device performance over implementations of Fig.~\ref{fig:generalFormForErrorfulSinglesRepeat1}.
This is because they violate E2. We discuss means of mitigating this limitation later.

This error model affecting a HQS admits the following mathematical simplification.

\begin{lemma}
\label{simplerErrorLem}    
Any HQS affected by errors obeying E1-E3 is equivalent to the single-qubit gates being error-free and the remaining error being independent of $A_j, B_j, D_j$ ($\forall j \in \mathbb{N}^{\leq N}$). 
\end{lemma}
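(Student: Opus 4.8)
The plan is to treat this as a \emph{noise-tailoring} reduction: I want to show that the error model E1--E3 can always be repackaged so that every single-qubit gate in Fig.~\ref{fig:generalFormForErrorfulSinglesRepeat1} acts perfectly, with all residual noise absorbed into the many-body operations $\mathcal{U}_1$, $\mathcal{U}_2$ and the two halves $e^{-i\mathcal{H}t/2}$, and in such a way that this residual noise is blind to the choice of $A_j$, $B_j$, $D_j$. The reduction is a statement about re-partitioning errors, not about erasing them.

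First I would use E1 to write each noisy operation as its ideal counterpart composed with a CPTP map acting jointly on system and environment; in particular, to each single-qubit-gate location I attach an error map $\mathcal{E}_g$. The crux is then to invoke E2 to argue that these maps are \emph{gate-independent}: two runs of the HQS that differ only in their single-qubit gates are required to produce i.i.d.\ errors, so the error attached to a given location cannot depend on which single-qubit gate sits there, forcing $\mathcal{E}_g = \mathcal{E}$ for every admissible $g$ at that location. This deduction of gate-independence from the i.i.d.\ condition is the conceptual heart of the lemma and, I expect, the step demanding the most care.

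With gate-independence established, I would re-partition the circuit so that \emph{every} single-qubit gate acts noiselessly and its gate-independent error is merged into the neighbouring many-body block: the maps flanking $A_j$ into $\mathcal{U}_1$, those flanking $B_j$ into the two $e^{-i\mathcal{H}t/2}$ halves, and those flanking $D_j$ into $\mathcal{U}_2$ (read backwards through the final $Z$-measurement). Because each absorbed map is gate-independent, the residual noise is manifestly independent of the choice of $A_j$, $B_j$ and $D_j$, which is exactly what the statement asserts. The $C_j$ gates are rendered error-free by the same manoeuvre, but I would refrain from claiming independence of the residual from $C_j$, since only $A_j, B_j, D_j$-independence is needed by the downstream accreditation analysis.

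The final ingredient is E3. The reduction above is derived for the single-qubit gates actually present, whereas the protocol will later substitute nontrivial single-qubit (trap) gates for identity gates; E3 guarantees that such substitutions never \emph{decrease} $\mathrm{VD}(P,\Tilde{P})$, so the simplified model continues to furnish a valid upper bound. The main obstacle I anticipate lies in the bookkeeping of the gate-independence and absorption steps: CPTP error maps cannot be freely conjugated through gates the way unitaries can, so I must verify that merging each $\mathcal{E}$ into an adjacent block preserves both the CPTP property (E1) and the i.i.d.\ structure (E2) \emph{uniformly over the environment}, rather than smuggling in hidden correlations with the very single-qubit gates we are trying to decouple from the noise.
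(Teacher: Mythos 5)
Your proposal is correct and follows essentially the same route as the paper: attach a CPTP error map to each operation via E1, invoke the i.i.d.\ condition E2 to conclude that these maps are independent of which single-qubit gates are present, and then conjugate the single-qubit-gate errors through the ideal blocks so that they are absorbed into residual maps attached to the two time evolutions, with the residual depending on $C_j$ (because of the conjugation by $\bar{C}$) but not on $A_j$, $B_j$, $D_j$. The only cosmetic difference is that the paper's proof uses only E1 and E2, with E3 entering later in the trap-versus-target comparison rather than in this lemma.
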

Notably the Lemma~\ref{simplerErrorLem} excludes $C_j$ from the single-qubit gates that the remaining error is independent of. This follows from the $C_j$ gates' unfortunate position in the simulation, as can be seen in the proof of Lemma~\ref{simplerErrorLem}, in Appendix \ref{SimplifyingErrorApp}. 
However, this dependence in no way interferes with our quantum accreditation protocol presented below.

\subsection{Quantum Accreditation Protocol}
\label{protocolSubsection}
We now sketch of our accreditation protocol, as presented in Protocol \ref{fullProtocolSketch}.
Ours is a trap-based protocol, inspired by quantum interactive proof systems~\cite{2018Theos} and 
recently adapted for the accreditation of digital quantum computation in the circuit model~\cite{2021Sams}.
As such, it is based on two different types of simulations -- the target and the trap.
The target simulation is the one of interest while the trap simulations are factitious ones
that exist to infer the effect of error in the target simulations that implemented on a HQS obeying E1-E3 .

\emph{Target simulations}:
This is the analogue quantum simulation we are actually interested in.
For it to be accredited, it is implemented on a HQS as illustrated in Fig.~\ref{fig:targInSplitUpWITHV}. 
In the absence of any errors, it applies $e^{-i \mathcal{H} t}$ to $\vert \psi_0 \rangle$ followed by single-qubit measurements. 
It requires that the initial states $\ket{\psi_0}= \otimes_{j = 1}^N \big( A^{\prime}_j \big) \ket{0}^{\otimes n}$ be encoded to enable simulation by the strongly universal Hamiltonian. This is denoted by $\mathcal{V}$.
Similarly, after the time evolution the state must be decoded, before measurement by $\mathcal{M}= \otimes_{j = 1}^N \big( D^{\prime}_j \big) Z^{\otimes n}$.
 This is done via $\mathcal{V}^{-1}$.

\emph{Trap simulations}:
\label{basicTrapStochSubsection}
 For any target simulation as in  Fig.~\ref{fig:targInSplitUpWITHV}, a trap simulation can be obtained by replacing the identities ($I$) with single-qubit gates 
 that invert the time evolution of the Hamiltonian 
 and changing some pre-existing single-qubit gates depending on some random parameters, as explained in the caption of Fig.~\ref{fig:PauliTwirledTrapCircuit(withHadamardsandConglom)WITHV}.
 The former is detailed in Sec.~\ref{sec:trapdesign}.
In the absence of any errors, the trap simulation executes the identity evolution, which will result in the all-zero output.
This can be checked using resources scaling linearly with the problem size $N$.
Any deviation from an all-zero output indicates the presence of errors.

Both target and trap simulations (Figs.~\ref{fig:targInSplitUpWITHV} and \ref{fig:PauliTwirledTrapCircuit(withHadamardsandConglom)WITHV} respectively)
can be implemented on a HQS (as in Fig.~~\ref{fig:generalFormForErrorfulSinglesRepeat1}) with 
$A_i = A_i', B_i = C_i= I, D_i=D_i', \mathcal{U}_1=\mathcal{V}, \mathcal{U}_2 = \mathcal{V}^{-1},$
and
$ A_i = P_iH^hZ', B_i = C_i = \mathcal{C}_i, D_i=Z'H^hP_i, \mathcal{U}_1=\mathcal{V} =\mathcal{U}_2 = \mathcal{V}^{-1}$
respectively.

The goal of the trap simulations is to detect any error on the HQS that obeys E1-E3 and provide a bound of the form in Eqn.~\ref{eq:tvd}.
Lemma~\ref{ProbCompare} below establishes a relationship between the effects of these errors in the trap simulations and the target simulation.
Thus, detecting the errors in the former, via Lemma~\ref{stochDetectLemma} enables us to bound the
variational distance between the error-free and erroneous probability distributions over the measurement outcomes of the latter, as per Theorem~\ref{finalTheorem}.

\begin{figure*}
    \centering
 \begin{quantikz}[row sep=0.3cm]
\lstick{$\ket{0}$} & \gate{A^{\prime}_1} & \gate[wires=5, nwires=4]{\mathcal{V}} &\gate[wires=5, nwires=4]{e^{-i \mathcal{H} t / 2}} & \gate{I} & \gate[wires=5, nwires=4]{e^{-i \mathcal{H} t / 2}} & \gate{I} &
 \gate[wires=5, nwires=4]{\mathcal{V}^{-1}} & \gate{D^{\prime}_1} & \meterD{Z}\\
\lstick{$\ket{0}$} & \gate{A^{\prime}_2} &  &  & \gate{I} & & \gate{I} & &\gate{D^{\prime}_2}&  \meterD{Z}\\
\lstick{$\ket{0}$} & \gate{A^{\prime}_3} & &  & \gate{I} & & \gate{I} & &\gate{D^{\prime}_3} & \meterD{Z}\\
\vdots & \vdots &  \vdots & & \vdots & &\vdots&& \vdots & \vdots \\
\lstick{$\ket{0}$} & \gate{A^{\prime}_N}& &  & \gate{I} & & \gate{I} & & \gate{D^{\prime}_N} & \meterD{Z}
\end{quantikz}
    \caption{Target simulation with encoding of initial state $\ket{\psi_0}= \otimes_{j = 1}^N \big( A^{\prime}_j \big) \ket{0}^{\otimes n}$ by $\mathcal{V}$, 
    and decoding after the time evolution by $\mathcal{V}^{-1}$ before the measurements $\mathcal{M}= \otimes_{j = 1}^N \big( D^{\prime}_j \big) Z^{\otimes N}$. 
    Some of the $N$ qubits may be used as ancilla for preparing the encoded states via $\mathcal{V}$.
    The identity operations ($I$) are long enough in duration to allow for the implementation of single-qubit gates, as explained in Sec.~\ref{sec:trapdesign}.}
    \label{fig:targInSplitUpWITHV}
\end{figure*}
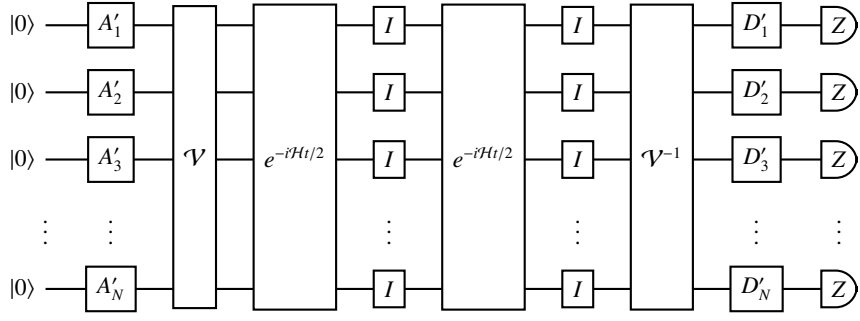

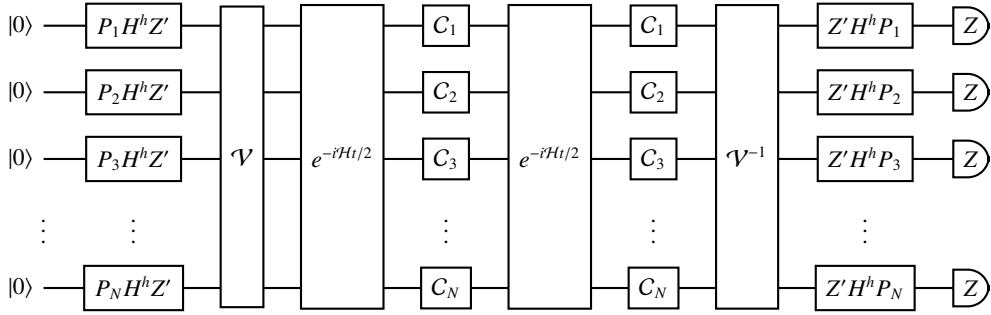
\begin{figure*}[t]
    \centering
 \begin{quantikz}[row sep=0.3cm]
\lstick{$\ket{0}$} & \gate{P_1 H^h Z^{\prime}} & \gate[wires=5, nwires=4]{\mathcal{V}} & \gate[wires=5, nwires=4]{e^{-i \mathcal{H} t / 2}} & \gate{\mathcal{C}_1} & \gate[wires=5, nwires=4]{e^{-i \mathcal{H} t / 2}} & \gate{\mathcal{C}_1} & \gate[wires=5, nwires=4]{\mathcal{V}^{-1}} &  \gate{Z^{\prime} H^h P_1} &\meterD{Z}\\
\lstick{$\ket{0}$} & \gate{P_2 H^h Z^{\prime}} & & & \gate{\mathcal{C}_2} & &  \gate{ \mathcal{C}_2} & & \gate{Z^{\prime} H^h P_2} & \meterD{Z}\\
\lstick{$\ket{0}$} & \gate{P_3 H^h Z^{\prime}} & & &\gate{\mathcal{C}_3} &  &  \gate{ \mathcal{C}_3} & &  \gate{Z^{\prime} H^h P_3} & \meterD{Z}\\
\vdots & \vdots &  & & \vdots& & \vdots & &\vdots \\
\lstick{$\ket{0}$} & \gate{ P_N H^h Z^{\prime}} & & &\gate{\mathcal{C}_N} &  &  \gate{ \mathcal{C}_N} & & \gate{Z^{\prime} H^h P_N} & \meterD{Z}
\end{quantikz}
    \caption{Trap simulation with encoding of initial state, $\mathcal{V}$, and decoding after the time evolution, $\mathcal{V}^{-1}$. 
    $H$ denotes the Hadamard gate, $h \in \{0,1\}$ is a random bit, $Z^{\prime}$ denotes applying a Pauli Z gate with probability $0.5$ (each instance of random operator $Z^{\prime}$ inside the circuit is independent)  and
  $\otimes_{j = 1}^n \mathcal{C}_j = \mathcal{C}$ is the time inversion circuit,
  $\mathcal{H}$ is an arbitrary accreditable Hamiltonian, $P_j$ is a single-qubit Pauli gate, chosen uniformly at random independently for each $j$, and $t$ is the duration of the simulation. Note that $P_jH^h\bar{Z}^{\prime}$'s (where $Z^{\prime}$ is first and $H^h$ is second in time order, but in the opposite order when written as operators) are in the same box because they can be compiled as a single gate.
  } \label{fig:PauliTwirledTrapCircuit(withHadamardsandConglom)WITHV}
\end{figure*}

\begin{lemma}
\label{ProbCompare}
If E3 holds, the variational distance between the probability distribution over measurement outcomes of an error-free implementation and that of the erroneous implementation is greater in a trap simulation, for any value of the random parameters, than in the target simulation.
 \end{lemma}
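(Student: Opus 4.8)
The plan is to establish the inequality for each fixed assignment of the random parameters $h$, $Z'$ and $\{P_j\}$ by exhibiting the trap of Fig.~\ref{fig:PauliTwirledTrapCircuit(withHadamardsandConglom)WITHV} as a controlled modification of the target of Fig.~\ref{fig:targInSplitUpWITHV}, and then invoking E3. First I would apply Lemma~\ref{simplerErrorLem} to pass to the simplified error picture in which the single-qubit gates $A_j,B_j,D_j$ are error-free and the residual error is a CPTP channel $\mathcal{E}$ independent of them (possibly still dependent on $C_j$). Because the target and trap are both the generic HQS of Fig.~\ref{fig:generalFormForErrorfulSinglesRepeat1} with the \emph{same} encoding/decoding $\mathcal{U}_1=\mathcal{V}$, $\mathcal{U}_2=\mathcal{V}^{-1}$ and the same analogue blocks $e^{-i\mathcal{H}t/2}$, they are afflicted by the identical physical error $\mathcal{E}$ at each location and differ only in their single-qubit gates.

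Next I would isolate exactly which single-qubit gates differ. Reading off the two parameterisations, the trap is obtained from the target by (i) replacing the identity gates $B_j=C_j=I$ by the time-inversion gates $\mathcal{C}_j$, and (ii) changing the state-preparation gates $A_j'\mapsto P_jH^hZ'$ and the measurement gates $D_j'\mapsto Z'H^hP_j$. Step (i) is literally the replacement of identity gates by single-qubit gates, so E3 guarantees it does not decrease the variational distance of Eqn.~\eqref{eq:tvd}. The remaining task is to show that step (ii) does not decrease it either.

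For step (ii) I would use that, after Lemma~\ref{simplerErrorLem}, the error $\mathcal{E}$ is independent of $A_j$ and $D_j$, so the error-free and erroneous pre-measurement states are dressed by the \emph{same} clean frame gates. Since the time-inversion gates make the clean trap evolution the identity, the measurement gate satisfies $(Z'H^hP_j)^\dagger\ket{0}=P_jH^hZ'\ket{0}$, which is exactly the preparation gate $P_jH^hZ'$ applied to $\ket{0}$; hence the clean trap output is deterministically the all-zero string and the trap's variational distance collapses to $1-\tilde P(\mathbf{0})$, the probability that $\mathcal{E}$ leaks the state $\ket{\phi}=P_jH^h\ket{0}$ out of itself. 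I would then argue that this detection probability dominates the target's distinguishing probability: the $Z'$ factors are inert against the $Z$-basis measurement, the $P_j$ factors act as a relabelling of outcomes that leaves the variational distance invariant, and the synchronised Hadamards $H^h$ at preparation and measurement, composed around the identity clean evolution, place the error in a basis no less sensitive to $\mathcal{E}$ than the target's spread-out evolution $e^{-i\mathcal{H}t}$. Assembling (i) and (ii) then yields that the variational distance in the trap is at least that in the target for every fixed choice of parameters.

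The main obstacle is precisely step (ii): unlike $B_j$ and $C_j$, the preparation and measurement gates change from one single-qubit gate to another rather than from the identity, so E3 does not apply to them directly, and a naive change of measurement basis (the Hadamard in particular) can in general alter the variational distance. The crux is therefore to show that the trap's symmetric frame—identical $P_j$ and $h$ at preparation and measurement, wrapped around an identity clean evolution—cannot make the error less detectable than it is distinguishable in the target. This is where the concentrated, all-zero clean trap output and the error-independence from $A_j,D_j$ supplied by Lemma~\ref{simplerErrorLem} must be combined, and I expect it to be the technically delicate part of the argument.
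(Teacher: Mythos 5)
Your step~(i) is, in substance, the paper's entire proof: the authors simply assert that a trap is obtained from the target by adding single-qubit gates in place of identity gates and invoke E3, with no appeal to Lemma~\ref{simplerErrorLem} and no further decomposition. Where you depart from the paper is your step~(ii), and there your proposal has a genuine gap --- one you candidly flag yourself. The replacement $A'_j\mapsto P_jH^hZ'$ and $D'_j\mapsto Z'H^hP_j$ swaps one non-identity single-qubit gate for another, so a literal reading of E3 does not license it, and the bridging argument you sketch does not go through for a \emph{fixed} value of the random parameters, which is what the lemma claims. Concretely, a residual error that is pure dephasing commutes with the $h=0$, $P_j\in\{I,Z\}$ trap frame and is invisible to the final $Z$-basis measurement, so that trap instance has variational distance zero, while a target whose $A'_j$ prepare superposition states would detect it; the trap's ``symmetric frame around an identity evolution'' is therefore not automatically at least as sensitive as the target, and no amount of combining the all-zero clean output with the gate-independence from Lemma~\ref{simplerErrorLem} will establish the pointwise inequality.

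The paper sidesteps this by treating E3 as a device-level assumption that covers the whole target-to-trap substitution --- in effect the lemma holds by fiat of E3, and the proof is a one-line restatement. Your observation that the preparation and measurement gates change in a way E3 does not literally address is a fair criticism of that proof, but as a proof attempt your step~(ii) is not closed: to complete it you would need either to strengthen or reinterpret E3 so that it applies to arbitrary changes of single-qubit gates (which is evidently the authors' intent), or to weaken the statement to an average over the random parameters, where the twirling argument of Lemma~\ref{stochDetectLemma} could plausibly be brought to bear. As written, the ``technically delicate part'' you identify is exactly the part that remains unproven.
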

 \begin{proof}
    The trap simulations are constructed so that traps can be obtained from the target by adding single-qubit gates to the target simulation in place of identity gates.
    E3 then implies that the variational distance
between the probability distribution over measurement outcomes of an error-free implementation and that of the erroneous implementation is greater in a trap simulation than in the target simulation.
 \end{proof}

\begin{lemma}[Detection of errors]
    \label{stochDetectLemma}
   Any error, or combination of errors, obeying E1-E3 and occurring within a trap simulation are detected with a probability of at least $1/2$, unless the errors cancel with each other.
\end{lemma}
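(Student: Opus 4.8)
The plan is to reduce the detection question to the behaviour of a single Pauli-twirled error channel acting on a product stabiliser state, and then to show that any non-identity Pauli is flagged by the all-zero test with probability at least one half over the remaining randomness. First I would invoke Lemma~\ref{simplerErrorLem} to take the single-qubit gates in the $A_j$ and $D_j$ positions (which carry the random Paulis $P_j$, the Hadamard bit $h$, and the random Pauli-$Z$ operations $Z'$) to be error-free, with the residual error a CPTP map independent of these gates, though possibly correlated with the fixed $\mathcal{C}_j$. Since the noiseless bulk $\mathcal{V}^{-1}\mathcal{C}\,e^{-i\mathcal{H}t/2}\mathcal{C}\,e^{-i\mathcal{H}t/2}\mathcal{V}$ equals the identity -- the time inversion $\mathcal{C}$ is the sublattice-$Z$ Pauli (available because the square lattice is bipartite) obeying $\mathcal{C}\mathcal{H}\mathcal{C}=-\mathcal{H}$, so that $\mathcal{C}e^{-i\mathcal{H}t/2}\mathcal{C}=e^{+i\mathcal{H}t/2}$ -- the error-free trap returns the all-zero string, and ``detection'' means observing any nonzero outcome.

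Next I would collapse all the interspersed errors into one net channel. Conjugating each individual error by the (fixed, though generally non-Clifford) operations to its left inside the bulk moves every error to the far end of the bulk, leaving the ideal bulk as the identity and a single net CPTP map $\mathcal{E}$ sitting between the bulk and the output gates. Because each error is independent of $A_j,D_j$ and the propagation only conjugates by $\mathcal{V}$, $e^{-i\mathcal{H}t/2}$ and $\mathcal{C}$, this $\mathcal{E}$ remains independent of $P_j,h,Z'$. The circuit then reads, in operator order, $\mathcal{M}_Z\cdot D\cdot\mathcal{E}\cdot A$, and the random Paulis appear symmetrically as $P\,\mathcal{E}\,P$ around $\mathcal{E}$ (the $P$ applied last in $A$ and first in $D$ being the same string). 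Averaging $\frac{1}{4^N}\sum_{P} P\,\mathcal{E}\big(P(\cdot)P\big)\,P$ over $P_j$ uniform on $\{I,X,Y,Z\}$ per qubit is exactly the Pauli twirl, so $\mathcal{E}$ may be replaced by a Pauli channel $\sum_Q p_Q\,Q(\cdot)Q$. By convexity it then suffices to treat each Pauli component $Q$ separately, and ``the errors cancel'' is precisely the case in which this channel is trivial, i.e.\ the net twirled Pauli is the identity.

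It then remains to show that a fixed non-identity Pauli $Q$ is detected with probability at least $1/2$ over $h$. With the $P$-twirl performed, the surviving randomness sends the input to $H^h Z'_A\ket{0}=H^h\ket{0}$ and makes the effective single-qubit observable $H^h Z H^h$, i.e.\ a $Z$-measurement of $Q_j\ket{0}$ when $h=0$ and an $X$-measurement of $Q_j\ket{+}$ when $h=1$ (the $Z'$ factors drop out because they stabilise $\ket{0}$ on the input and commute with $Z$ on the output). Hence qubit $j$ is flipped for $h=0$ iff $Q_j\in\{X,Y\}$ and for $h=1$ iff $Q_j\in\{Y,Z\}$. Any non-identity single-qubit factor of $Q$ lies in $\{X,Y,Z\}$ and is therefore caught for at least one value of $h$, so averaging over $h\in\{0,1\}$ gives detection probability at least $1/2$, with $Y$-type factors caught with certainty. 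I expect the main obstacle to be the second step: justifying the Pauli twirl even though $\mathcal{V}$ and $e^{-i\mathcal{H}t/2}$ are not Clifford, and confirming that the dependence of the error on the fixed $\mathcal{C}_j$ permitted by Lemma~\ref{simplerErrorLem} never leaks into the twirling variables $P_j,h$ -- both of which hinge on the fact that only deterministic bulk operations are used to propagate the errors, so the net channel stays independent of the randomisation that performs the twirl.
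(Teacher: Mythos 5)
Your overall strategy is the same as the paper's: use Lemma~\ref{simplerErrorLem} to strip error from the randomised single-qubit gates, propagate the residual errors through the (deterministic, non-Clifford) bulk so that they collapse into net CPTP maps that remain independent of $P_j$, $h$ and $Z'$, replace the $\mathcal{P}\,\mathcal{E}\,\mathcal{P}$ sandwich by its Pauli twirl, and then argue that a non-identity Pauli component flips the $Z$-basis outcome for at least one of the two values of $h$. The subtlety you flag at the end --- that the twirl survives conjugation by $\mathcal{V}$ and $e^{-i\mathcal{H}t/2}$ because only fixed operations are used to move the errors --- is exactly the content of the paper's auxiliary Lemma~\ref{errorConciseFormatTrap}, so that part is sound.

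There is, however, one genuine gap: you dismiss the $Z'$ randomisation as irrelevant (``the $Z'$ factors drop out because they stabilise $\ket{0}$ on the input and commute with $Z$ on the output''), and your net channel $\mathcal{E}$ only absorbs the errors occurring \emph{inside} the bulk. State-preparation and measurement errors, which the lemma must also cover, sit \emph{outside} the $P_jH^hZ'$ and $Z'H^hP_j$ layers; they cannot be folded into $\mathcal{E}$ without conjugating by those layers, which would make the resulting map depend on the very randomness performing the twirl. In the paper these SPAM errors are handled precisely by the $Z'$ randomisation: $\mathcal{E}_{\mathrm{prep}}$ and $\mathcal{E}_{\mathrm{meas}}$ are sandwiched only by the random $Z$'s, which twirl them (relative to the $\ket{0}^{\otimes N}$ input and $Z$-basis output) into stochastic bit-flip error that the all-zero test detects whenever it occurs. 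As written, your argument proves the lemma only for errors in the Hamiltonian evolutions and digital gates, not for erroneous preparation or measurement; restoring the $Z'$ twirl closes this hole.
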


The proof is provided in Appendix~\ref{sec:lemma3proof}. 

\begin{theorem}
\label{finalTheorem}
    Protocol \ref{fullProtocolSketch} performs accredited analogue simulation as per Definition \ref{AAQSDef}
     subject to the the error model, i.e.,  E1-E3, with $N_{\textrm{tr}}$ trap simulations, where
    \be
        N_{\mathrm{tr}} = \bigg \lceil \dfrac{2}{\theta^2} \ln{\bigg( \dfrac{2}{1 - \alpha} \bigg)} \bigg \rceil + 1
    \ee
\end{theorem}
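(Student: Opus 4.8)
The plan is to convert the empirically observed trap failures into a high-confidence upper bound on $\mathrm{VD}(P,\tilde P)$ of the target, and then to invoke Hoeffding's inequality to fix how many traps $N_{\mathrm{tr}}$ are needed to reach accuracy $\theta$ and confidence $\alpha$. First I would use the error model to put the noise into a tractable form: by E1 each erroneous operation is a CPTP map, and the random single-qubit gates $P_j H^h Z'$ dressing every non-identity location implement a Pauli twirl, so that, after the classical recovery of the Pauli frame, the effective error on both target and trap becomes a stochastic mixture of Pauli operators. By E2 these stochastic errors are i.i.d. across runs that differ only in their single-qubit gates, which is exactly the family of runs the protocol performs. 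Writing $q$ for the total weight of the non-cancelling Pauli errors in this mixture, the no-error component reproduces the ideal target distribution, so by convexity of the variation distance $\mathrm{VD}(P,\tilde P)\le q$.

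Next I would relate $q$ to an experimentally accessible quantity. Lemma~\ref{ProbCompare} guarantees that, for every fixed setting of the random parameters, the variation distance of the trap is at least that of the target, so it suffices to bound the trap. Lemma~\ref{stochDetectLemma} then supplies the key quantitative input: every non-cancelling error flips a trap away from its all-zero outcome with probability at least $1/2$, whence the expected trap failure probability $\overline{p}$ satisfies $\overline{p}\ge q/2$. Combining these, $\mathrm{VD}(P,\tilde P)\le q\le 2\overline{p}$, and the cancelling errors may be discarded because they act trivially and contribute nothing to either side.

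It then remains to estimate $\overline{p}$ from the $N_{\mathrm{tr}}$ trap runs. By the i.i.d. structure from E2, the per-run indicators ``trap failed'' are i.i.d. Bernoulli with mean $\overline{p}$; let $g$ denote their observed average. Hoeffding's inequality gives $\Pr\bigl[\,\overline{p}\ge g+\theta/2\,\bigr]\le 2\exp(-N_{\mathrm{tr}}\theta^2/2)$, where the accuracy budget is $\theta/2$ on $\overline{p}$ precisely because the factor $2$ above converts it into accuracy $\theta$ on $\mathrm{VD}$. Demanding the right-hand side be at most $1-\alpha$ and solving for $N_{\mathrm{tr}}$ yields $N_{\mathrm{tr}}\ge \tfrac{2}{\theta^2}\ln\bigl(\tfrac{2}{1-\alpha}\bigr)$; taking the ceiling, together with one additional trap to absorb the target run interspersed among them and the rounding, reproduces the stated $N_{\mathrm{tr}}$. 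Reporting $\epsilon = 2g+\theta$ then certifies $\mathrm{VD}(P,\tilde P)\le\epsilon$ with confidence $\alpha$ and accuracy $\theta$, matching Definition~\ref{AAQSDef}.

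The hard part will be the first step, namely showing rigorously that the dressing gates effect a genuine Pauli twirl under E1--E3 so that the residual noise is stochastic and the trap pass/fail events are i.i.d. in the sense E2 demands, together with correctly combining Lemmas~\ref{ProbCompare} and~\ref{stochDetectLemma} with the averaging over random parameters so that the single factor of $2$, and hence the constant $2/\theta^2$ rather than $1/(2\theta^2)$, is pinned down exactly. By contrast, the remaining Hoeffding bookkeeping and the choice of $\epsilon$ are routine.
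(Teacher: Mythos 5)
Your proposal follows essentially the same route as the paper: reduce the noise to stochastic Pauli form via the twirl, chain Lemma~\ref{ProbCompare} and Lemma~\ref{stochDetectLemma} to obtain $\mathrm{VD}(P,\tilde{P})\le 2p_{\mathrm{inco}}$ (the paper isolates exactly this chain as Lemma~\ref{finalBound}), and apply Hoeffding's inequality with accuracy $\theta/2$ on $p_{\mathrm{inco}}$ to arrive at the stated $N_{\mathrm{tr}}$. The only (cosmetic) slip is your reading of the $+1$: it is there so that $N_{\mathrm{tr}}$ strictly exceeds $\tfrac{2}{\theta^2}\ln\bigl(\tfrac{2}{1-\alpha}\bigr)$ after taking the ceiling, not to absorb the interleaved target run, which in the protocol is an extra run on top of the $N_{\mathrm{tr}}$ traps.
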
 

This is our central result. Crucially, the additional resources required for our quantum accreditation protocol are independent of the size ($N$) as well as the specifics
(inputs in Definition~\ref{AQSDef}) of the analogue quantum simulation. The proof is provided in Appendix~\ref{sec:proofs}.

\subsection{Design of Trap Simulation}
\label{sec:trapdesign}

This section provides fuller details of the trap simulation used in Sec. \ref{protocolSubsection}. 

\subsubsection{Time Inversion of Accreditable Hamiltonian}
Our trap circuits use the notion of time-inversion circuits that effectively inverts the time evolution of the Hamiltonian. 
We show that such a circuit exists for accreditable Hamiltonians on a large class of lattices, of which the square in an instance.

\begin{definition}
\label{timeReverseCircDef}
For a specific accreditable Hamiltonian, $\mathcal{H}$, a time-inversion circuit, $\mathcal{C}$, is an operator such that,
\begin{align}
    \mathcal{C} \mathcal{H} \mathcal{C}^{\dag} = - \mathcal{H}.
\end{align}
We refer to this as inverting the Hamiltonian.
\end{definition}

A circuit, $\mathcal{C}$, conforming to Definition \ref{timeReverseCircDef} suffices to reverse the time evolution of an accreditable Hamiltonian due to Lemmas \ref{unitExp} and \ref{InvertTime}.
\begin{lemma}\( \\ \)
\label{unitExp}
    For any unitary, $\mathcal{U}$, Hamiltonian, $\mathcal{H}$, $\kappa \in\mathbb{C}$ and $t \in \mathbb{R}$:
    \begin{align}
        \mathcal{U} e^{-\kappa \mathcal{H} t} \mathcal{U}^{\dag}
        &=
        e^{- \kappa \mathcal{U} \mathcal{H}\mathcal{U}^{\dag} t}
    \end{align}
\end{lemma}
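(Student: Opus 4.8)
The plan is to expand the matrix exponential as its defining power series and exploit the fact that conjugation by a unitary is multiplicative on products, so that $\mathcal{U}$ can be passed through the series term by term. First I would write
\begin{align}
    \mathcal{U} e^{-\kappa \mathcal{H} t} \mathcal{U}^{\dag}
    &=
    \mathcal{U} \bigg( \sum_{k=0}^{\infty} \frac{(-\kappa t)^k}{k!} \mathcal{H}^k \bigg) \mathcal{U}^{\dag}
    \nonumber \\
    &=
    \sum_{k=0}^{\infty} \frac{(-\kappa t)^k}{k!}\, \mathcal{U} \mathcal{H}^k \mathcal{U}^{\dag},
\end{align}
where moving $\mathcal{U}$ and $\mathcal{U}^{\dag}$ inside the sum is the one analytic point that needs care.

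The key algebraic step is then to observe that $\mathcal{U} \mathcal{H}^k \mathcal{U}^{\dag} = \big( \mathcal{U} \mathcal{H} \mathcal{U}^{\dag} \big)^k$. This follows by inserting $k-1$ copies of the identity $\mathcal{U}^{\dag}\mathcal{U} = I$ between consecutive factors of $\mathcal{H}$, so that $\mathcal{U} \mathcal{H}^k \mathcal{U}^{\dag} = \mathcal{U}\mathcal{H}(\mathcal{U}^{\dag}\mathcal{U})\mathcal{H}\cdots(\mathcal{U}^{\dag}\mathcal{U})\mathcal{H}\mathcal{U}^{\dag}$, which telescopes. Substituting this back, the surviving sum is exactly the power series defining $e^{-\kappa \mathcal{U}\mathcal{H}\mathcal{U}^{\dag} t}$, which yields the claim.

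The only genuine obstacle — and it is a mild one — is justifying that the operators $\mathcal{U}$ and $\mathcal{U}^{\dag}$ may be interchanged with the infinite sum. Since the systems considered here are finite-dimensional collections of qubits, $\mathcal{H}$ has finite operator norm, the exponential series converges absolutely, and left/right multiplication by a fixed bounded operator is continuous; hence the interchange is immediate. (In an infinite-dimensional setting one would invoke the same continuity argument using $\| \mathcal{H} \| < \infty$.) Everything else is elementary, and I would emphasise that no property of $\mathcal{U}$ beyond $\mathcal{U}^{\dag}\mathcal{U} = I$ is actually used — \emph{unitarity per se is not required}, only invertibility with $\mathcal{U}^{-1} = \mathcal{U}^{\dag}$.
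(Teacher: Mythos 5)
Your proof is correct and follows essentially the same route as the paper's: expand the exponential as its power series and conjugate term by term, using $\mathcal{U}\mathcal{H}^k\mathcal{U}^{\dag} = (\mathcal{U}\mathcal{H}\mathcal{U}^{\dag})^k$. Your added remarks on the convergence justification and on unitarity not being essential are fine but do not change the argument.
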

\noindent Lemma \ref{unitExp} is proven in Appendix~\ref{InvertXYSubsection}.

\begin{lemma}[Inverting the Hamiltonian] Given $\mathcal{C},$ $\mathcal{H}$ as in Definition~\ref{timeReverseCircDef},
    \label{InvertTime}
    \begin{align}
    \mathcal{C} e^{-i \mathcal{H} t} \mathcal{C}^{\dag}
    &=
    e^{i  \mathcal{H} t}
\end{align} 
\end{lemma}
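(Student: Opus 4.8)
The plan is to obtain the identity directly by combining Lemma~\ref{unitExp} with the defining relation of the time-inversion circuit from Definition~\ref{timeReverseCircDef}, so no genuinely new machinery is needed. The key observation is that Lemma~\ref{unitExp} is stated for an \emph{arbitrary} unitary $\mathcal{U}$, an arbitrary Hamiltonian, a complex scalar $\kappa$, and real $t$, which is exactly the general form we need to specialise.

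First I would note that the time-inversion circuit $\mathcal{C}$ is a unitary operator, being a product of single- and two-qubit quantum gates, so it is a legitimate choice of $\mathcal{U}$ in Lemma~\ref{unitExp}. I would then apply that lemma with $\mathcal{U} = \mathcal{C}$ and $\kappa = i$, yielding
\begin{align}
    \mathcal{C} e^{-i \mathcal{H} t} \mathcal{C}^{\dagger}
    &=
    e^{-i \big( \mathcal{C} \mathcal{H} \mathcal{C}^{\dagger} \big) t}.
\end{align}
The final step is purely a substitution: Definition~\ref{timeReverseCircDef} gives $\mathcal{C} \mathcal{H} \mathcal{C}^{\dagger} = -\mathcal{H}$, so the exponent becomes $-i(-\mathcal{H})t = i \mathcal{H} t$, which produces the claimed $\mathcal{C} e^{-i \mathcal{H} t} \mathcal{C}^{\dagger} = e^{i \mathcal{H} t}$.

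In this statement there is essentially no obstacle to overcome: all the analytic content (that conjugation passes through the matrix exponential) is already packaged in Lemma~\ref{unitExp}, and the algebraic content is the one-line defining property of $\mathcal{C}$. The only points worth stating explicitly are that $\mathcal{C}$ is unitary so that $\mathcal{C}^{\dagger} = \mathcal{C}^{-1}$ and the hypothesis of Lemma~\ref{unitExp} is met, and that the choice $\kappa = i$ is admissible since $\kappa$ ranges over $\mathbb{C}$. The substantive difficulty — exhibiting a circuit $\mathcal{C}$ satisfying $\mathcal{C}\mathcal{H}\mathcal{C}^{\dagger} = -\mathcal{H}$ for accreditable Hamiltonians on the relevant lattices — is a separate matter handled in the construction of Sec.~\ref{sec:trapdesign}, and is not required for this lemma.
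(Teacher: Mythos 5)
Your proposal is correct and follows exactly the paper's own proof: apply Lemma~\ref{unitExp} with $\mathcal{U} = \mathcal{C}$ to pull the conjugation into the exponent, then substitute $\mathcal{C}\mathcal{H}\mathcal{C}^{\dag} = -\mathcal{H}$ from Definition~\ref{timeReverseCircDef}. The extra remarks on the unitarity of $\mathcal{C}$ and the admissibility of $\kappa = i$ are sensible but add nothing beyond the paper's two-line argument.
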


\begin{proof}
    Via Lemma \ref{unitExp},
    \begin{align}
        \mathcal{C} e^{-i \mathcal{H} t} \mathcal{C}^{\dag}
        &=
        e^{-i \mathcal{C} \mathcal{H}\mathcal{C}^{\dag} t}
    \end{align}  
    Definition \ref{timeReverseCircDef} then implies the lemma.
\end{proof}

The existence of a time inversion circuit, meeting the requirements of Definition \ref{timeReverseCircDef} for an accreditable Hamiltonian, $\mathcal{H}$
is established by Theorem \ref{fullInversionXYTheorem}.
\begin{theorem}\label{fullInversionXYTheorem}
    For any set of XY-interactions, where the interactions and qubits form a two-colourable graph with the qubits as vertices and interactions as edges, the corresponding accreditable Hamiltonian can be inverted by applying a time inversion circuit consisting of Pauli $Z$ gates on a chromatic subset (as defined in Definition~\ref{cromatSetDef}) of the qubits. 
\end{theorem}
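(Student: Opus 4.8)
The plan is to construct the time-inversion circuit explicitly as a product of single-qubit Pauli $Z$ gates and then verify Definition~\ref{timeReverseCircDef} by computing its action on each interaction term $X_i X_j + Y_i Y_j$ separately. The single algebraic fact that drives the whole argument is that $Z$ anticommutes with both $X$ and $Y$, i.e.\ $Z X Z^{\dag} = -X$ and $Z Y Z^{\dag} = -Y$, while $Z$ commutes with the identity and with itself. Everything else is bookkeeping on top of this relation.

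First I would fix a subset $S$ of the qubits and set $\mathcal{C} = \bigotimes_{i \in S} Z_i$. Conjugating a single interaction term by $\mathcal{C}$ and applying the anticommutation relation to each factor, I would establish the per-edge identity
\begin{align}
\mathcal{C}\big(X_i X_j + Y_i Y_j\big)\mathcal{C}^{\dag} = (-1)^{|\{i,j\}\cap S|}\big(X_i X_j + Y_i Y_j\big),
\end{align}
where the exponent counts how many endpoints of the edge $\langle i,j\rangle$ lie in $S$, since each such endpoint contributes one factor of $-1$ to both the $XX$ and the $YY$ pieces. Thus the sign of the $\langle i,j\rangle$ term is flipped precisely when exactly one of $i,j$ belongs to $S$, and is left unchanged when neither or both do.

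Next I would invoke the two-colourability hypothesis. Because the interaction graph is bipartite, its vertices split into colour classes with every edge joining vertices of opposite colour; taking $S$ to be one such class (the chromatic subset of Definition~\ref{cromatSetDef}) forces $|\{i,j\}\cap S| = 1$ for every edge in the sum. Summing the per-edge identity over all $\langle i,j\rangle$ then yields $\mathcal{C}\mathcal{H}\mathcal{C}^{\dag} = -\mathcal{H}$, which is exactly the condition of Definition~\ref{timeReverseCircDef}; combined with Lemma~\ref{InvertTime} this confirms that $\mathcal{C}$ reverses the time evolution. Finally I would note that $\mathcal{C}$, being a product of single-qubit $Z$ gates, is manifestly unitary (indeed self-inverse) and of the claimed form, and that the square lattice is a special case since it is bipartite.

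The step I expect to be the crux is the translation between the combinatorial two-colourability condition and the per-edge parity requirement $|\{i,j\}\cap S|=1$: the theorem hinges on recognising that a proper $2$-colouring is exactly what guarantees that every edge straddles the set on which the $Z$ gates act, so that \emph{every} term in $\mathcal{H}$ flips sign simultaneously. Once the anticommutation bookkeeping of the second step is in place, the remaining algebra is routine.
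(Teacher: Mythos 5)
Your proposal is correct and follows essentially the same route as the paper: establish that conjugation by $Z$ flips the sign of each $X$ and $Y$ factor (the paper derives this via a general Pauli-conjugation lemma, you state the anticommutation directly), use two-colourability to ensure exactly one endpoint of every edge carries a $Z$, and sum over edges to obtain $\mathcal{C}\mathcal{H}\mathcal{C}^{\dag}=-\mathcal{H}$. Your per-edge sign formula $(-1)^{|\{i,j\}\cap S|}$ is a slightly more explicit piece of bookkeeping than the paper's WLOG argument, but the underlying reasoning is identical.
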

The proof is provided in Appendix \ref{proofOfTimeReversal}.
Some examples of the use of time-inversion circuits to invert time evolutions are given in Appendix \ref{ExamplesAppendix}.

\subsubsection{Traps in the error-free case}
\label{error-freeTrapReduceSubsection}

We now present the measurement statistics of the traps in the error-free case. The erroneous case is more involved and hence is presented in Appendix~\ref{sec:lemma3proof}. The measurement statistics are in fact quite simple: the traps return a known `correct' result when implemented without any error occurring. This makes use of the time inversion circuits displayed above and is demonstrated in Lemma \ref{error-freeTrapReduce}. 
\begin{lemma}
    \label{error-freeTrapReduce}
    
    The error-free implementation of a trap simulation (on $N$ qubits) always gives the all-zero output with certainty.
\end{lemma}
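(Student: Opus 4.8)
The plan is to trace the trap circuit of Fig.~\ref{fig:PauliTwirledTrapCircuit(withHadamardsandConglom)WITHV} through in the error-free case and verify that, for every value of the random parameters $h$, the $Z^{\prime}$'s, and the Pauli gates $P_j$, the net unitary acting on $\ket{0}^{\otimes N}$ returns $\ket{0}^{\otimes N}$, so that the final $Z$-measurement yields the all-zero string with certainty. First I would observe that the layout sandwiches the two half-evolutions $e^{-i\mathcal{H}t/2}$ around a central layer of single-qubit gates $\mathcal{C}_j$, and that $\mathcal{C}=\otimes_{j=1}^N \mathcal{C}_j$ is exactly the time-inversion circuit of Definition~\ref{timeReverseCircDef} for the accreditable Hamiltonian $\mathcal{H}$, whose existence is guaranteed by Theorem~\ref{fullInversionXYTheorem}. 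The core of the argument is then to apply Lemma~\ref{InvertTime} in the middle of the circuit.

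The key computational step is to collapse the block
\begin{equation}
    e^{-i\mathcal{H}t/2}\,\mathcal{C}\,e^{-i\mathcal{H}t/2}\,\mathcal{C}^{\dag}
\end{equation}
to the identity. I would use Lemma~\ref{InvertTime} in the form $\mathcal{C}\,e^{-i\mathcal{H}t/2}\,\mathcal{C}^{\dag}=e^{i\mathcal{H}t/2}$ to rewrite the innermost sandwich, reducing the product to $e^{-i\mathcal{H}t/2}\,e^{i\mathcal{H}t/2}=I$. (Here one must be slightly careful that the two central gate layers in the figure both equal $\mathcal{C}$ rather than $\mathcal{C}$ and $\mathcal{C}^{\dag}$; since the $\mathcal{C}_j$ are Pauli-$Z$ gates by Theorem~\ref{fullInversionXYTheorem} they are self-inverse, so $\mathcal{C}=\mathcal{C}^{\dag}$ and the two interpretations coincide.) Thus the encoding $\mathcal{V}$, the full Hamiltonian evolution together with the interleaved $\mathcal{C}$'s, and the decoding $\mathcal{V}^{-1}$ telescope: $\mathcal{V}^{-1}$ cancels $\mathcal{V}$ after the bracketed evolution reduces to the identity, leaving only the outer single-qubit layers.

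With the bulk reduced to the identity, the remaining task is to check that the outer single-qubit layers cancel. The input-side gate on qubit $j$ is $P_j H^{h}Z^{\prime}$ and the output-side gate is $Z^{\prime}H^{h}P_j$ with the \emph{same} $h$ and $P_j$ (the figure uses identical Paulis and Hadamard-power on both ends). Because each $P_j$, each $H^{h}$, and each $Z^{\prime}$ is self-inverse up to a global phase, composing the output layer with the input layer on each qubit yields the identity on that qubit; I would note that the independent random $Z^{\prime}$ instances cancel because $Z$ acting on the computational-basis state produced is only a phase, and the $H^{h}P_j$ prefix/suffix are exact inverses. Hence the net operation on $\ket{0}^{\otimes N}$ is the identity and the $Z$-measurement returns all zeros deterministically. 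The only subtlety I anticipate as the main obstacle is bookkeeping the role of the ancilla qubits used by $\mathcal{V}$ for the encoding: one must confirm that the decoding $\mathcal{V}^{-1}$ restores those ancillas to $\ket{0}$ as well, so that the all-zero output genuinely spans all $N$ qubits and not merely the logical register; this is where the clean telescoping of $\mathcal{V}$ and $\mathcal{V}^{-1}$ around the identity-reduced evolution is essential.
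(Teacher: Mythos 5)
Your proposal is correct and follows essentially the same route as the paper: write out the net unitary, use Lemma~\ref{InvertTime} (with $\mathcal{C}=\mathcal{C}^{\dag}$) to collapse the two half-evolutions sandwiching the time-inversion layers to the identity, cancel $\mathcal{V}^{-1}\mathcal{V}$, and then cancel the outer single-qubit layers, with the independent $Z^{\prime}$ instances acting trivially on $\ket{0}^{\otimes N}$ and before the $Z$-measurement. The ancilla point you raise is handled automatically by the exact cancellation $\mathcal{V}^{-1}\mathcal{V}=I$, just as in the paper.
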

\begin{proof}
    If no errors occur, the trap simulation, as in Fig. \ref{fig:PauliTwirledTrapCircuit(withHadamardsandConglom)WITHV} gives the all-zero output with probability
    \begin{align}
        \label{firstMathEquivTrap}
        \vert \langle 0 \vert^{\otimes N} \bar{Z}^{\prime} H^h \mathcal{P} \mathcal{V}^{-1} \mathcal{C} e^{-i \mathcal{H} t /2}  \mathcal{C} e^{-i \mathcal{H} t /2} \mathcal{V} \mathcal{P} H^h \bar{Z}^{\prime} \vert 0 \rangle^{\otimes N} \vert^2,
    \end{align}
    where $\bar{Z}^{\prime}$ denotes a Pauli $Z$ gate on each qubit, with probability $1/2$ (each qubit choice is independent so all possible combinations of Pauli $Z$ gates or the identity, e.g. $IZI \cdots IZ$, on $N$ qubits occurs with equal probability when a $\bar{Z}^{\prime}$ gate is implemented), and $\mathcal{P}$ is a independent uniformly random string of Pauli (or identity) gates ( one on each qubit).\\
Then, Lemma \ref{InvertTime} allows us to re-write the quantity in Eqn.~\ref{firstMathEquivTrap} as
\begin{align}
    & \vert \langle 0 \vert^{\otimes N} \bar{Z}^{\prime} H^h \mathcal{P}\mathcal{V}^{-1} e^{i \mathcal{H} t /2}  e^{-i \mathcal{H} t /2} \mathcal{V} \mathcal{P} H^h \bar{Z}^{\prime}\vert 0 \rangle^{\otimes N} \vert^2\\
    &=
    \label{timeEvolsCancel}
    \vert \langle 0 \vert^{\otimes N} \bar{Z}^{\prime} H^h \mathcal{P} \mathcal{V}^{-1} \mathcal{V} \mathcal{P} H^h \bar{Z}^{\prime} \vert 0 \rangle^{\otimes N} \vert^2
    =
    1.
\end{align}
\end{proof}

\begin{figure*}
    \centering
\begin{algorithm}[H]

\SetAlgorithmName{Protocol}{protocol}{List of Protocols}

$\mathbf{Input:}$ \\
$\bullet$ A Hamiltonian, $\mathcal{H}_0$\\
$\bullet$ A real evolution time, $t \in \mathbb{R}$\\
$\bullet$ An initial state, $\vert \psi_0 \rangle$\\ 
$\bullet$ A vector of measurements for after the time evolution, $\mathcal{M}$\\
$\bullet$ A desired confidence, $\alpha$.\\
$\bullet$ A desired accuracy of the output bound, $\theta$.\\
\vspace{0.1cm}
 \begin{enumerate}

    \item Reduce $\mathcal{H}_0$ to an accreditable Hamiltonian, $\mathcal{H}$
    \item Calculate the circuit, $\mathcal{V}$, that transforms $\vert 0 \rangle^{\otimes N}$ to the initial state corresponding to $\vert \psi_0 \rangle$ after the reduction to an accreditable Hamiltonian. 
     \item Calculate the required number of traps $N_{\textrm{tr}} = \bigg \lceil \dfrac{2}{\theta^2} \ln{\bigg( \dfrac{2}{1 - \alpha} \bigg)} \bigg \rceil + 1$.
     \item Pick uniformly at random a number between $1$ and $N+1$ to be the index of the target simulation 
     \item For $i=1$ to $N_{\textrm{tr}}+1$
     \begin{enumerate}
         \item If simulation $i$ is the target simulation:
         
    \hspace{1.5em}Prepare the state $\vert 0 \rangle^{\otimes N}$.\\
    \hspace{1.5em} For qubit $j$:\\
    \hspace{3.0em}Apply $A^{\prime}_j$\\
    \hspace{1.5em}Apply $\mathcal{V}$\\
    \hspace{1.5em}Apply $e^{-i \mathcal{H} t / 2}$.\\
    \hspace{1.5em}Apply Identity.\\
    \hspace{1.5em}Apply $e^{-i \mathcal{H} t / 2}$.\\
    \hspace{1.5em}Apply Identity.\\
    \hspace{1.5em}Apply $\mathcal{V}^{-1}$\\
    \hspace{1.5em} For qubit $j$:\\
    \hspace{3.0em}Apply $D^{\prime}_j$\\
    \hspace{1.5em}Perform measurements \& record outcomes.
    
        \item If simulation $i$ is a trap simulation:

    \hspace{1.5em} For qubit $j$:\\
    \hspace{3.0em} Choose Pauli gate $P_j$ uniformly at random\\
    \hspace{1.5em} Choose $h \in \{ 0,1 \}$ uniformly at random\\
    \hspace{1.5em} Prepare the state $\vert 0 \rangle^{\otimes N}$.\\ 
    \hspace{1.5em} For qubit $j$:\\
    \hspace{3.0em} With probability $0.5$ apply a Pauli $Z$ gate to qubit $j$\\
    \hspace{3.0em} Apply $P_jH^h$ to qubit $j$\\
    \hspace{1.5em} Apply $\mathcal{V}$\\
    \hspace{1.5em} Apply $e^{-i \mathcal{H} t / 2}$\\
    \hspace{1.5em} Apply time inversion circuit\\
    \hspace{1.5em} Apply $e^{-i \mathcal{H} t / 2}$\\
    \hspace{1.5em} Apply time inversion circuit\\
    \hspace{1.5em} Apply $\mathcal{V}^{-1}$\\
    \hspace{1.5em} For qubit $j$:\\
    \hspace{3.0em} Apply $H^hP_j$ to qubit $j$\\
    \hspace{3.0em} With probability $0.5$ apply a Pauli $Z$ gate to qubit $j$\\
    \hspace{1.5em} Perform measurements \& record if correct
     \end{enumerate}

\item Calculate   $\epsilon
    = 
    2\dfrac{\text{Number of correct traps}}{\text{Total number of traps}}$
\end{enumerate}

\vspace{0.1cm}
$\mathbf{Return}:$ Target simulation measurement outcome sample and $\epsilon$.
\caption{Analogue Accreditation Protocol
 \label{fullProtocolSketch}}
\end{algorithm}
\end{figure*}

\section{Discussion}
\label{discuss}
   We have presented a quantum accreditation protocol for quantum analogue simulations that can be applied to extant experiments and devices. 	
   	It builds on the theoretical advances of strongly universal Hamiltonians and quantum accreditation as well as 
   experimental progress towards the realisation of programmable hybrid analogue–digital quantum evolution.
   
      {Our protocol} completely eliminates the need for classical simulations, freeing us to accredit simulations of arbitrarily large systems where quantum simulators offer the most.
   Our error model captures large classes of errors hybrid quantum simulators experience.
   Additionally, the resource requirements of our protocol are reasonable: 
   the depth and time overheads are independent of the size of the system being simulated ($N$); 
   the number of extra single-qubit gates required is at most linear in the system size; 
   the total duration the time evolution is applied for remains unchanged from an un-accredited simulation; 
   and the number of trap simulations required is quadratic in the reciprocal of the required accuracy of the bound on the variational distance the protocol outputs.
  
     Consequently, our protocol can be implemented on extant programmable hybrid analogue-digital quantum simulators~\cite{Browaeys2020,Bluvstein2022}.
      It is particularly amenable if the HQS implements the XY-interaction on a 2-colourable graph as it eliminates the $\mathcal{V}$ and $\mathcal{V}^{-1}$ operations in Figs.~\ref{fig:targInSplitUpWITHV} and~\ref{fig:PauliTwirledTrapCircuit(withHadamardsandConglom)WITHV}.

   Our work leaves several potential avenues for improvement, centred particularly around relaxing E2 and E3 of our error model.
   The independence assumed in E2 contributes, via Lemma~\ref{simplerErrorLem}, to the error in HQS being independent of the single-qubit gates $A_j, B_j, D_j.$ 
   E2 may be relaxed to allow for error that depends weakly on single-qubit-gates~\cite[Appendix 2]{2021Sams}.
   This may be combined with a relaxed identicality assumption as well by using probability concentration inequalities more permissive than Hoeffding's inequality. 
   
   Relaxing E3 would require understanding it better in terms of its physical implications, as it is the most novel and least explored of our assumptions. 
   Its relaxation may benefit from inverting the Hamiltonian more than once.
   This would however increase the overheads of quantum accreditation in terms of time and single-qubit gates.
   
   Finally, given the trap-based nature of our accreditation protocol, it may be tempting to suggest an `error' wherein the accreditable Hamiltonian 
   $\mathcal{H}$ in Figs. ~\ref{fig:targInSplitUpWITHV} and~\ref{fig:PauliTwirledTrapCircuit(withHadamardsandConglom)WITHV} is replaced by another $\mathcal{H'}.$
   This error will cancel in the trap simulation, thus effecting an error that our trap simulation seemingly fails to detect. 
   As its effect will not cancel in the target simulation, this `error' violates E3 and is mathematically disallowed by our error model.
   Physically, the replacement of $\mathcal{H}$ by $\mathcal{H'}$ \emph{ceteris paribus} is unlikely to be due to noise or stochastic miscalibrations.

\section{Acknowledgements}

We thank Ross Grassie, Sean Thrasher, James Mills, and Raul Garcia-Patron for useful conversations.
This work was supported, in part, 
by the UK Networked Quantum Information Technologies (NQIT) Hub (EP/M013243/1), 
the UKRI ExCALIBUR project QEVEC (EP/W00772X/2), 
and a Leverhulme Trust Early Career Fellowship.\\

\bibliography{References}


\onecolumngrid
\appendix

\section{Proof of Lemma~\ref{simplerErrorLem}}
\label{SimplifyingErrorApp}
\begin{proof}[Proof of Lemma~\ref{simplerErrorLem}]
Consider the operations in the circuit of Fig.~\ref{fig:generalFormForErrorfulSinglesRepeat1}. Each operation is affected by error that obeys E1-E3 (including preparation and measurement errors). Each error is a CPTP map acting before or after the operation. Let $\mathcal{E}_O$ denote the CPTP map representing the error in operation $O$, so that the action of the erroneous device can be written as $\mathcal{E}_O \left( (O \otimes I) \sigma  (O^{\dagger} \otimes I)  \right)$, where $\sigma$ is a density matrix representing the state of the HQS+environment just before the application of $O$, and $I$ an identity operator applying on the environment.

Denote $\bar{A} \equiv A_1 \otimes A_2 \otimes \ldots A_N$, and $\bar{B}$, $\bar{C}$, and $\bar{D}$ defined similarly. Also let $\mathcal{E}_{\mathcal{H}, t}^{\prime}$, $\mathcal{E}_{\mathcal{H}, t}$ represent the CPTP error in the first and second time evolutions, respectively.

For the rest of the proof, by abuse of notation, we will use the same symbol for a CPTP map that has a single unitary Kraus operator and the unitary operator itself. For example $\bar{A}$ will be used to denote CPTP map with unitary Kraus operator  $\bar{A}$. Similarly $e^{-i \mathcal{H} t/2}$ will be used to denote CPTP map with unitary Kraus operator $e^{-i \mathcal{H} t/2}$. We will then use composition of CPTP maps to proceed with our proof.

Then, the erroneous implementation of the operations in the circuit of Fig.~\ref{fig:generalFormForErrorfulSinglesRepeat1}, not including the preparation and measurement operations, on system+environment can be expressed mathematically (operation is map composition):

\ben
    && \bar{D} \mathcal{E}_{\bar{D}} \mathcal{U}_2 \mathcal{E}_{\mathcal{U}_2} \bar{C}  \mathcal{E}_{\bar{C}} \mathcal{E}_{\mathcal{H}, t} 
    e^{-i \mathcal{H} t/2} \bar{B} \mathcal{E}_{\bar{B}} \mathcal{E}^{\prime}_{\mathcal{H}, t} e^{-i \mathcal{H} t/2} \mathcal{E}_{\mathcal{U}_1} \mathcal{U}_1 \mathcal{E}_{\bar{A}} \bar{A}\\
    &=&\bar{D} \big( \mathcal{U}_2 \bar{C} \bar{C}^{\dagger} \mathcal{U}_2^{\dagger} \big) \mathcal{E}_{\bar{D}} \mathcal{U}_2 \mathcal{E}_{\mathcal{U}_2} \bar{C} \mathcal{E}_{\bar{C}} \mathcal{E}_{\mathcal{H}, t} e^{-i \mathcal{H} t/2} \bar{B} \mathcal{E}_{\bar{B}} \mathcal{E}^{\prime}_{\mathcal{H}, t} e^{-i \mathcal{H} t /2} \mathcal{E}_{\mathcal{U}_1} \mathcal{U}_1 \mathcal{E}_{\bar{A}} \big( \mathcal{U}_1^{\dagger} e^{i \mathcal{H} t/2} e^{-i \mathcal{H} t/2} \mathcal{U}_1 \big) \bar{A} \\
     &=&
    \bar{D} \mathcal{U}_2 \bar{C} \big( \bar{C}^{\dagger} \mathcal{U}_2^{\dagger} \mathcal{E}_{\bar{D}} \mathcal{U}_2 \mathcal{E}_{\mathcal{U}_2} \bar{C} \mathcal{E}_{\bar{C}} \mathcal{E}_{\mathcal{H}, t} \big) e^{-i \mathcal{H} t/2} \bar{B} \big( \mathcal{E}_{\bar{B}} \mathcal{E}^{\prime}_{\mathcal{H}, t} e^{-i \mathcal{H} t /2} \mathcal{E}_{\mathcal{U}_1} \mathcal{U}_1 \mathcal{E}_{\bar{A}} \mathcal{U}_1^{\dagger} e^{i \mathcal{H} t/2} \big) e^{-i \mathcal{H} t/2} \mathcal{U}_1 \bar{A} \\
    &=&
       \bar{D} \mathcal{U}_2 \bar{C} \left( \Tilde{\mathcal{E}}_{\mathcal{H}, t} e^{-i \mathcal{H} t/2} \right) 
       \bar{B}  \left(\Tilde{\mathcal{E}}^{\prime}_{\mathcal{H}, t}e^{-i \mathcal{H} t/2} \right)
        \mathcal{U}_1 \bar{A},
        \label{eq:lemma1}
\een
 where we have defined
 $\Tilde{\mathcal{E}}_{\mathcal{H}, t} = \bar{C}^{\dagger} \mathcal{U}_2^{\dagger} \mathcal{E}_{\bar{D}} \mathcal{U}_2 \mathcal{E}_{\mathcal{U}_2} \bar{C} \mathcal{E}_{\bar{C}} \mathcal{E}_{\mathcal{H}, t}$ and $\Tilde{\mathcal{E}}^{\prime}_{\mathcal{H}, t} = \mathcal{E}_{\bar{B}} \mathcal{E}^{\prime}_{\mathcal{H}, t} e^{-i \mathcal{H} t /2} \mathcal{E}_{\mathcal{U}_1} \mathcal{U}_1 \mathcal{E}_{\bar{A}} \mathcal{U}_1^{\dagger} e^{i \mathcal{H} t/2}.$

 Due to the assumption of independence in E2, the error in $\bar{A}$, $\bar{B}$ and $\bar{D}$ is gate-independent, 
 and $\Tilde{\mathcal{E}}_{\mathcal{H}, t}$ and $\Tilde{\mathcal{E}}^{\prime}_{\mathcal{H}, t}$ are independent of $\bar{A}$, $\bar{B}$ and $\bar{D}$.
Hence the erroneous implementation of any element of Fig.~\ref{fig:generalFormForErrorfulSinglesRepeat1}, 
barring the preparation and measurement,
can be considered as if all operations except the time evolutions are error-free 
and the error in the time evolutions is represented by an operator independent of $\bar{A}$, $\bar{B}$ and $\bar{D}$, as noted within the parentheses in Eqn.~\ref{eq:lemma1}.
\end{proof}

\section{Proof of Lemma \ref{stochDetectLemma}}
\label{sec:lemma3proof}

As in Appendix~\ref{SimplifyingErrorApp}, by abuse of notation, we will use the same symbol for a CPTP map that has a single unitary Kraus operator and the unitary operator itself. All operations that follow are compositions of CPTP maps. We first prove an auxiliary lemma.

\begin{lemma}
\label{errorConciseFormatTrap}
    Presupposing E1, 
    the probability of measuring a state  $\vert \phi \rangle \in \big \{ \vert 0 \rangle, \vert 1 \rangle \big\}^N$ in an erroneous implementation of a trap simulation on $N$ qubits is
    \begin{align}
    \label{ErrorTrapAnalysisLemmaStatement}
    \tr\left(\ket{\phi}\bra{\phi}  \mathcal{E}_{\mathrm{meas}} \bar{Z}^{\prime} \bar{H}^h \mathcal{P} \mathcal{E}_{\mathrm{com}} \mathcal{P} \bar{H}^h \bar{Z}^{\prime} \mathcal{E}_{\mathrm{prep}} \left( \ket{0}\bra{0}^{\otimes N} \otimes \ket{env}\bra{env} \right)  \right),
\end{align}
where $\mathcal{E}_{\mathrm{prep}}, \mathcal{E}_{\mathrm{meas}}, \mathcal{E}_{\mathrm{com}}$ are CPTP maps independent of $h$ and $\mathcal{P}$, $\bar{H}=H^{\otimes N}$, $\mathcal{P}=P_1 \otimes P_2 \otimes \ldots \otimes P_N$,
and $\bar{Z}^{\prime}$ is the application with probability $1/2$ of a Pauli $Z$ gate independently on each qubit, $\ket{env}$ is the initial state of the environment and the rest of notation is the same as in Appendix~\ref{SimplifyingErrorApp}.
\end{lemma}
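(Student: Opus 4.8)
The plan is to begin from the fully erroneous trap circuit of Fig.~\ref{fig:PauliTwirledTrapCircuit(withHadamardsandConglom)WITHV}, written as a composition of CPTP maps on system-plus-environment, and then to massage the scattered errors into the three lumps $\mathcal{E}_{\mathrm{prep}}$, $\mathcal{E}_{\mathrm{com}}$, $\mathcal{E}_{\mathrm{meas}}$ claimed in the statement. First I would identify the trap as the instance $A_j = P_j H^h Z'$, $B_j = C_j = \mathcal{C}_j$, $D_j = Z' H^h P_j$, $\mathcal{U}_1 = \mathcal{V} = \mathcal{U}_2 = \mathcal{V}^{-1}$ of the generic HQS in Fig.~\ref{fig:generalFormForErrorfulSinglesRepeat1}, and invoke Lemma~\ref{simplerErrorLem} on the bulk of the circuit, i.e. everything barring preparation and measurement. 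This immediately renders the single-qubit layers $\bar{A} = \mathcal{P}\bar{H}^h\bar{Z}'$ and $\bar{D} = \bar{Z}'\bar{H}^h\mathcal{P}$ error-free, and collapses the errors on $\mathcal{V}$, the two half-evolutions $e^{-i\mathcal{H}t/2}$, the two time-inversion layers, and $\mathcal{V}^{-1}$ into residual operators independent of $\bar{A},\bar{B},\bar{D}$, exactly as packaged in Eqn.~\eqref{eq:lemma1}.

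Next I would define $\mathcal{E}_{\mathrm{com}}$ to be the entire map sandwiched between the $\bar{A}$ and $\bar{D}$ layers, namely the composition of $\mathcal{V}$, both noisy half-evolutions, both time-inversion layers $\mathcal{C}$, and $\mathcal{V}^{-1}$, together with all their residual errors. Each unitary factor here ($\mathcal{V}, e^{-i\mathcal{H}t/2}, \mathcal{C}, \mathcal{V}^{-1}$) is fixed by the target simulation and the accreditable Hamiltonian and so carries no dependence on the random parameters $h$ and $\mathcal{P}$; by Lemma~\ref{simplerErrorLem} the residual errors are independent of $\bar{A}$ and $\bar{D}$, hence of $h$ and $\mathcal{P}$. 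The preparation and measurement errors $\mathcal{E}_{\mathrm{prep}}$ and $\mathcal{E}_{\mathrm{meas}}$ are CPTP by E1, and since the preparation of $\ket{0}^{\otimes N}$ and the final $Z$-measurements are themselves unchanged by $h$ and $\mathcal{P}$ (these parameters enter only the $\bar{A}$ and $\bar{D}$ layers), E2 guarantees their errors are likewise independent of $h$ and $\mathcal{P}$. Assembling these pieces and writing the outcome probability of $\ket{\phi}$ as $\tr$ against $\ket{\phi}\bra{\phi}$ then reproduces Eqn.~\eqref{ErrorTrapAnalysisLemmaStatement}.

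I expect the main obstacle to be the careful bookkeeping of which error lands in which lump, and in particular verifying the independence claim. The delicate point is that Lemma~\ref{simplerErrorLem} permits the combined error to depend on the $C_j$ layer, which in the trap equals the time-inversion layer $\mathcal{C}$; this is harmless here precisely because $\mathcal{C}$ is deterministic, being fixed by $\mathcal{H}$ through Theorem~\ref{fullInversionXYTheorem}, and so introduces no dependence on the randomised $h$ or $\mathcal{P}$. A secondary subtlety is purely notational: the two occurrences of $\bar{Z}'$ are independent coin-flipped layers, and one must track the time-versus-operator ordering of $P_j H^h Z'$ so that the $\bar{A}$ and $\bar{D}$ factors appear in the correct order flanking $\mathcal{E}_{\mathrm{com}}$. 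Once these points are pinned down, the result follows by direct composition with no further analysis.
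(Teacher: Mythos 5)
Your proposal is correct and follows essentially the same route as the paper: invoke Lemma~\ref{simplerErrorLem} to make the random single-qubit layers error-free with residual errors independent of them, then package everything between those layers into a single CPTP map $\mathcal{E}_{\mathrm{com}}$ independent of $h$ and $\mathcal{P}$. The only (cosmetic) difference is that the paper explicitly commutes the residual error maps outward to the $\mathcal{P}$ layers so that the unitary sandwich $\mathcal{V}^{-1}\mathcal{C}e^{-i\mathcal{H}t/2}\mathcal{C}e^{-i\mathcal{H}t/2}\mathcal{V}$ visibly cancels to the identity, whereas you absorb those unitaries directly into $\mathcal{E}_{\mathrm{com}}$ --- the resulting map is the same, and your handling of the $C_j$-dependence and the ordering of $P_jH^hZ'$ matches the paper's.
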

\begin{proof}

Via Lemma \ref{simplerErrorLem}, we can consider a trap simulation with error only in the time evolutions, state preparation, and measurement when considering trap simulations with error. The erroneous implementation gives a probability of measuring $\vert \phi \rangle$ of
\begin{align}
    \label{ErrorTrapAnalysisFirst}
    \tr \left( \ket{\phi}\bra{\phi} \mathcal{E}_{\mathrm{meas}} \bar{Z}^{\prime} \bar{H}^h \mathcal{P} \mathcal{V}^{-1} \mathcal{C} \Tilde{\mathcal{E}}_{\mathcal{H}, t} e^{-i \mathcal{H} t /2}  \mathcal{C} \Tilde{\mathcal{E}}^{\prime}_{\mathcal{H}, t} e^{-i \mathcal{H} t /2} \mathcal{V} \mathcal{P} \bar{H}^h \bar{Z}^{\prime} \mathcal{E}_{\mathrm{prep}} \left( \ket{0}\bra{0}^{\otimes N} \otimes \ket{env}\bra{env} \right) \right),
\end{align}
where the $\Tilde{\mathcal{E}}_{\mathcal{H}, t}, \Tilde{\mathcal{E}}^{\prime}_{\mathcal{H}, t}$, $\mathcal{E}_{\mathrm{prep}}$, $\mathcal{E}_{\mathrm{meas}}$ are CPTP maps (representing error) that are independent of the value of $h$ and the choice of $\mathcal{P}$ because of Lemma~\ref{simplerErrorLem}.
$\mathcal{C}$ denotes the time inversion circuits, as in the definition of the trap.\\

Using  $\mathcal{U} \mathcal{E} = \mathcal{E'} \mathcal{U}$, where $\mathcal{E'}=\mathcal{U} \mathcal{E} \mathcal{U}^{-1}$ and $\mathcal{U}$ an invertible CPTP map, we can commute $\Tilde{\mathcal{E}}_{\mathcal{H}, t}$ to the left, in Eqn.~\ref{ErrorTrapAnalysisFirst}, until it meets the leftmost $\mathcal{P}$; similarly, commuting $\Tilde{\mathcal{E}}^{\prime}_{\mathcal{H}, t}$ to the right, in Eqn.~\ref{ErrorTrapAnalysisFirst}, until it meets the rightmost $\mathcal{P}$, gives,
\begin{align}
    \label{ErrorTrapAnalysisSecond}
    \tr \left( \ket{\phi} \bra{\phi} \mathcal{E}_{\mathrm{meas}} \bar{Z}^{\prime} \bar{H}^h \mathcal{P} \hat{\mathcal{E}}_{\mathcal{H}, t} \mathcal{V}^{-1} \mathcal{C} e^{-i \mathcal{H} t /2}  \mathcal{C} e^{-i \mathcal{H} t /2} \mathcal{V} \hat{\mathcal{E}}^{\prime}_{\mathcal{H}, t} \mathcal{P} \bar{H}^h \bar{Z}^{\prime} \mathcal{E}_{\mathrm{prep}} \left( \ket{0}\bra{0}^{\otimes N}  \otimes \ket{env}\bra{env} \right) \right),
\end{align}
for some other CPTP maps, $\hat{\mathcal{E}}_{\mathcal{H}, t}$ and $\hat{\mathcal{E}}^{\prime}_{\mathcal{H}, t}$, that are also independent of $h$ and $\mathcal{P}$. 
As in the error-free case in Sec. \ref{error-freeTrapReduceSubsection}, there are many cancellations. So Eqn.~\ref{ErrorTrapAnalysisSecond} is mathematically equivalent to,
\begin{align}
    \label{ErrorTrapAnalysisThird}
     \tr \left( \ket{\phi} \bra{\phi} \mathcal{E}_{\mathrm{meas}} \bar{Z}^{\prime} \bar{H}^h \mathcal{P} \mathcal{E}_{\mathrm{com}} \mathcal{P} \bar{H}^h \bar{Z}^{\prime} \mathcal{E}_{\mathrm{prep}} \left( \ket{0}\bra{0}^{\otimes N} \otimes \ket{env}\bra{env} \right) \right),
\end{align}
where $\mathcal{E}_{\mathrm{com}} = \hat{\mathcal{E}}_{\mathcal{H}, t} \hat{\mathcal{E}}^{\prime}_{\mathcal{H}, t}$ is a CPTP map independent of $h$ and $\mathcal{P}$.
\end{proof}

\begin{proof}[Proof of Lemma ~\ref{stochDetectLemma}]
 Lemma~\ref{errorConciseFormatTrap} allows us to express the probability of measuring a state, $\vert \phi \rangle \in \big \{ \vert 0 \rangle, \vert 1 \rangle \big\}^{\otimes N}$ from in a trap as,

\begin{align}
    \label{ErrorTrapAnalysisRestateInLemma3}
     \tr \left( \ket{\phi} \bra{\phi} \mathcal{E}_{\mathrm{meas}} \bar{Z}^{\prime} \bar{H}^h \mathcal{P} \mathcal{E}_{\mathrm{com}} \mathcal{P} \bar{H}^h \bar{Z}^{\prime} \mathcal{E}_{\mathrm{prep}} \left( \ket{0}\bra{0}^{\otimes N} \otimes \ket{env}\bra{env} \right) \right),
\end{align}
where $\mathcal{E}_{\mathrm{prep}}, \mathcal{E}_{\mathrm{meas}}, \mathcal{E}_{\mathrm{com}}$ are CPTP maps independent of the single-qubit gates stated in Lemma~\ref{errorConciseFormatTrap}.
The uniformly random Pauli operators now surrounding the CPTP maps in Eqn.~\ref{ErrorTrapAnalysisRestateInLemma3} implement a Pauli twirl on their respective CPTP map, effectively reducing each error to stochastic Pauli error (the random Pauli operators are also effectively removed by this twirl).
 
More specifically, that the effect of the $Z^{\prime}$ gates in Eqn.~\ref{ErrorTrapAnalysisRestateInLemma3} are to effectively reduce the error represented by $\mathcal{E}_{\mathrm{prep}}$ and $ \mathcal{E}_{\mathrm{meas}}$ to stochastic Pauli $X$ gate error, and the effect of the $\mathcal{P}$ (which are random Pauli gates on each qubit in the simulation) is to twirl the error represented by $\mathcal{E}_{\mathrm{com}}$ to stochastic Pauli error~\cite[Appendix A]{2019Sam}. We then update Eqn.~\ref{ErrorTrapAnalysisRestateInLemma3}, setting $\ket{\phi}$ to $\ket{0}^{\otimes N}$ (as this is the probability we are interested in):
\begin{align}
    \label{ErrorTrapAnalysisRestateInLemma3Second}
    \tr \left( \ket{0}\bra{0}^{\otimes N} \mathcal{E}_{\mathrm{meas}}' \bar{H}^h \mathcal{E}_{\mathrm{com}}' \bar{H}^h \mathcal{E}_{\mathrm{prep}}' \left( \ket{0}\bra{0}^{\otimes N} \otimes \ket{env}\bra{env} \right) \right)
\end{align}
Where $\mathcal{E}_{\mathrm{prep}}', \mathcal{E}_{\mathrm{meas}}', \mathcal{E}_{\mathrm{com}}'$ represent the post-Pauli twirl equivalents of $\mathcal{E}_{\mathrm{prep}}, \mathcal{E}_{\mathrm{meas}}, \mathcal{E}_{\mathrm{com}}$, respectively (i.e. as stochastic Pauli error).

We then consider the detection of each kind of Pauli error depending on whether the Hadamard gates (the $\bar{H}$ in Eqn.~\ref{ErrorTrapAnalysisRestateInLemma3Second}) are present or not\footnote{As $Y=iXZ$ stochastic Pauli error is treated as stochastic Pauli $X$ or $Z$ error.}.
Note that the Pauli error constituting $\mathcal{E}_{\mathrm{prep}}', \mathcal{E}_{\mathrm{meas}}'$ are always Pauli $X$ and so are always detected, if they occur; the presence of the Hadamard gates only affects whether the error in $\mathcal{E}_{\mathrm{com}}$ is detected or not.\

For the case of $h=0$, which happens with probability $1/2$, and results in the Hadamard gates not being present, Pauli $X$ errors flip the outcome of the Pauli $Z$ measurements and so are detected.\\

For the case of $h=1$, which also happens with probability $1/2$, and results in the Hadamard gates being present, the Pauli $Z$ error is mapped to Pauli $X$ error (and \textit{vice versa}) by being surrounded by the Hadamard gates. Therefore the Pauli $Z$ measurements detect what was previously (before the Hadamard gates switched the Pauli $X$ and $Z$ error) the Pauli $Z$ error but not what was Pauli $X$ i.e., any error that was not detected in the $h=0$ case.

Therefore, any stochastic Pauli error, as the Pauli twirls have effectively reduced all error to, is detected in either the $h=0$ case or the $h=1$ case (or both, if it is Pauli $Y$ error). Each case occurs with probability $1/2$, hence any error occurring is detected with probability at least $1/2$.

\end{proof}

\section{Proof of Theorem \ref{finalTheorem}}

\label{sec:proofs}.

\noindent We begin with a definition.

\begin{definition}
\label{BoundedFormDef}
An error occurring in a HQS is said to be in stochastic-error-form if there exists an $r \in [0,1]$ such that,
\be
    \label{boundedDefEquation}
    \Tilde{\rho}	=    \big( 1 - r\big) \rho +  r \sigma,
\ee
where
 $\rho$ is the state of the HQS if the error had never occurred,
 $\Tilde{\rho}$ is the state of the HQS right after the error occurs,
 $\sigma$ is an arbitrary state of the HQS that encompasses all errors.
\end{definition}

Given Lemma~\ref{ProbCompare} it suffices to acquire (experimentally) an estimate of a bound on the error of the traps in order to get a bound on the $\mathrm{VD}$ of the target simulation output from its error-free distribution. The bound on the error of the traps is acquired indirectly by estimating the probability trap simulations return an incorrect measurement outcome. Subsequently we get a bound on the $\mathrm{VD}$ of the target using the following Lemma.

\begin{lemma}
    \label{finalBound}
    For a target simulation of any accreditable Hamiltonian, if $p_{\mathrm{inco}}$ is the probability the corresponding trap simulations give an incorrect output, 
    and $\mathrm{VD}(P, \Tilde{P})$ is the variational distance between:\\
    $\bullet$ the probability distribution, $P$, over the measurement outcomes in the target circuit, assuming no error occurs\\ 
    $\bullet$ the experimentally-obtained probability distribution, $\Tilde{P}$, over the measurement outcomes in the target circuit when physically performed (where error may occur),\\
    then $\mathrm{VD}(P, \Tilde{P})$ is bounded as
    \begin{align}
        \mathrm{VD}(P, \Tilde{P}) \leq 2 p_{\mathrm{inco}}
    \end{align}
\end{lemma}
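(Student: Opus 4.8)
The plan is to bound the target's variational distance by an effective stochastic error weight, and then bound that weight using the trap outcomes, splitting the argument into a ``target side'' and a ``trap side'' linked by the error model.

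First, on the target side, I would cast the combined error afflicting the target simulation into the stochastic-error-form of Definition~\ref{BoundedFormDef}, writing the pre-measurement state as $\Tilde{\rho} = (1-r)\rho + r\sigma$, where $\rho$ is the error-free state that reproduces $P$. Since measurement is linear, the erroneous output distribution inherits the same convex structure, $\Tilde{P} = (1-r)P + r P_\sigma$ for some distribution $P_\sigma$, whence
\begin{align}
    \mathrm{VD}(P, \Tilde{P}) = r\, \mathrm{VD}(P, P_\sigma) \leq r.
\end{align}
This reduces the problem to showing that the effective error weight $r$ is at most $2 p_{\mathrm{inco}}$.

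Next, on the trap side, I would invoke the twirling already established in the proof of Lemma~\ref{stochDetectLemma}: the random Pauli operators surrounding the combined error reduce it to a stochastic Pauli channel, which is automatically in stochastic-error-form with some weight $r_{\mathrm{trap}}$ equal to the total probability of a non-identity Pauli. Because the error-free trap returns the all-zero string with certainty (Lemma~\ref{error-freeTrapReduce}), its incorrect-outcome probability factorises as $p_{\mathrm{inco}} = r_{\mathrm{trap}}\,\Pr[\text{detected}\mid \text{error}]$, and Lemma~\ref{stochDetectLemma} guarantees $\Pr[\text{detected}\mid\text{error}] \geq 1/2$, giving $r_{\mathrm{trap}} \leq 2 p_{\mathrm{inco}}$. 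To close the chain I would use E3 through Lemma~\ref{ProbCompare}: since the trap is obtained from the target by inserting single-qubit gates in place of identities, the error afflicting the trap is at least as damaging as that afflicting the target, so $r \leq r_{\mathrm{trap}}$. Chaining the three bounds yields $\mathrm{VD}(P,\Tilde{P}) \leq r \leq r_{\mathrm{trap}} \leq 2 p_{\mathrm{inco}}$.

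The main obstacle I anticipate is the linking step. The target is not twirled whereas the trap is, and Lemma~\ref{simplerErrorLem} leaves the residual error dependent on the $C_j$ gates, which differ between the two cases (identities in the target, time-inversion circuits in the trap). The combined error maps are therefore genuinely distinct objects, so one cannot simply identify their weights; the delicate point is to make E3 (via Lemma~\ref{ProbCompare}) certify that passing from the target to the trap does not shrink the effective error weight, so that the experimentally accessible $p_{\mathrm{inco}}$ controls $r$. The factor of two, rather than one, is precisely the price paid for the detection guarantee in Lemma~\ref{stochDetectLemma} being only $1/2$.
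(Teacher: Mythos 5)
Your trap-side argument (the factorisation $p_{\mathrm{inco}} = r_{\mathrm{trap}}\Pr[\text{detected}\mid\text{error}]$ combined with Lemma~\ref{stochDetectLemma} to get $r_{\mathrm{trap}} \leq 2p_{\mathrm{inco}}$) matches the paper. The gap is in the first two links of your chain, and it is exactly the obstacle you flag but do not resolve. First, the target simulation contains no random Pauli gates, so its error is a general (possibly coherent) CPTP map; such a map does not admit a decomposition $\Tilde{\rho} = (1-r)\rho + r\sigma$ with a non-trivial $r$ --- only the useless $r=1$ is guaranteed. The stochastic-error-form of Definition~\ref{BoundedFormDef} is earned by the Pauli twirl, which is present only in the traps. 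Second, even granting such an $r$ for the target, neither E3 nor Lemma~\ref{ProbCompare} gives you $r \leq r_{\mathrm{trap}}$: both are statements comparing \emph{variational distances of output distributions}, not weights in a convex decomposition of states. The residual error maps in target and trap are genuinely different objects (they depend on the $C_j$ gates, which differ between the two), so there is no a priori relation between their stochastic weights.

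The paper's proof avoids both problems by reordering the steps so that the only quantity ever transferred between target and trap is a variational distance. It first applies Lemma~\ref{ProbCompare} to get $\mathrm{VD}(P,\Tilde{P}) \leq \mathrm{VD}(P_{\mathrm{trap}},\Tilde{P}_{\mathrm{trap}})$ --- this is precisely what E3 licenses. Only then does it invoke the stochastic-error-form, and only for the trap, where the twirl legitimises it, obtaining $\mathrm{VD}(P_{\mathrm{trap}},\Tilde{P}_{\mathrm{trap}}) \leq \tfrac{1}{2}\tr|\rho-\Tilde{\rho}| \leq p^{\mathrm{Trap}}_{\mathrm{err}}$, and finally $p^{\mathrm{Trap}}_{\mathrm{err}} \leq 2p_{\mathrm{inco}}$ as in your trap-side step. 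To repair your proof, swap the order: do the VD comparison first, then the stochastic decomposition on the trap only; no comparison of error weights between the two circuits is then needed.
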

\begin{proof}
Lemma~\ref{ProbCompare} implies that the variational distance between the error-free and erroneous probability distributions is greater in the trap simulation than in the target simulation. Hence, the problem is reduced to showing that twice the probability a trap returns an incorrect output, $2 p_{\mathrm{inco}}$, is greater than the variational distance between the error-free and erroneous probability distributions in the trap simulations.\\
As all error in the trap simulations is Pauli twirled to stochastic Pauli error, it can be considered to be in stochastic-error-form (as defined in Def.~\ref{BoundedFormDef}).

If all the error is pushed to the very end of the HQS, then the state immediately before measurement (but after the CPTP map representing error in the measurement is applied) can be expressed as in Eqn.~\ref{boundedDefEquation} with $r \equiv p^{\text{Trap}}_{\mathrm{err}}.$

Following Ref.~\cite[Appendix Sec. 1]{2021Sams}:
\begin{align}
        \mathrm{VD}(P, \Tilde{P}) 
        \leq
        \dfrac{1}{2} \tr |\rho - \Tilde{\rho}|
        \leq
        \dfrac{p^{\mathrm{Trap}}_{\mathrm{err}}}{2} \tr | \rho - \sigma |
        \leq
        p^{\mathrm{Trap}}_{\mathrm{err}}.
    \end{align}
It thus suffices to prove that $p^{\mathrm{Trap}}_{\mathrm{err}} \leq 2 p_{\mathrm{inco}}$.
From Lemma~\ref{stochDetectLemma} any error occurring in a trap simulation is detected  with probability at least $1/2$, unless they cancel.\\
The cancellation of errors, in traps, is unaffected by the choice of random single-qubit gates (seen most simply using Lemma~\ref{simplerErrorLem} and noting that no random single qubit gates are between the two time evolutions), hence cancelling combinations of errors can be regarded as simply not occurring and their probability of occurring not contributing to $ p^{\text{Trap}}_{\mathrm{err}}$.\\  

Therefore, assuming error occurs, the probability it is detected, resulting in an incorrect output from a trap, is at least $1/2$. Hence,
    \begin{align}
    0.5 &\leq  \text{prob} \big( \text{trap output incorrect } \big \vert \text{ error occurs} \big) \\
       \Rightarrow  0.5 p^{\text{Trap}}_{\mathrm{err}}
        &\leq
        \text{prob} \big( \text{trap output incorrect } \big \vert \text{ error occurs} \big) p^{\mathrm{Trap}}_{\text{err}}\\
       \Rightarrow p^{\text{Trap}}_{\mathrm{err}}
        &\leq
        2p_{\mathrm{inco}}
    \end{align}
\end{proof}

Lemma \ref{finalBound} provides a bound on the variational distance between error-free and erroneous simulations, given an estimate of $p_{\mathrm{inco}}$.
This can only be done within a certain error, with preset confidence.

Due to E2, each run of a trap simulation is a Bernoulli trial with i.i.d. probability of detecting error: the trap either outputs the correct output, with probability (1 - $p_{\mathrm{inco}}$), or it does not with probability $p_{\mathrm{inco}}$.
The purpose of multiple runs of the trap is to estimate $p_{\mathrm{inco}}$ based on the statistics of the set of runs.
After $N_{\textrm{tr}}$ trap simulations, if $N_{\mathrm{inco}}$ is the number giving incorrect outputs, that is
 the number of traps where the measurements return the values known to be incorrect, 
then $p_{\mathrm{inco}}$ is approximately
\begin{align}
    p_{\mathrm{inco}} \approx \dfrac{N_{\mathrm{inco}}}{N_{\textrm{tr}}}.
\end{align}

\begin{proof}[Proof of Theorem~\ref{finalTheorem}]
    Protocol \ref{fullProtocolSketch} constructs trap and target simulations as defined in Sec. \ref{protocolSubsection}. E1-E3 enforce their correct functioning, as in Lemma \ref{stochDetectLemma}.
    E2 additionally implies that each run of a trap may be considered as an independent (of other runs) Bernoulli trial. 
    Protocol \ref{fullProtocolSketch} uses many traps and uses the number that give incorrect outputs to estimate the probability of error, $p_{\mathrm{inco}}$.\\
    Assuming there are $N_{\textrm{tr}}$ runs of traps and $N_{\mathrm{inco}}$ give incorrect outputs (i.e. detect error); define a random variable, $\mathbb{S} = \dfrac{N_{\mathrm{inco}}}{N_{\textrm{tr}}}$, which is both the value Protocol \ref{fullProtocolSketch} experimentally measures to estimate $p_{\mathrm{inco}}$ and also has the expected value, 
    \begin{align}
        \mathbb{E} \big[ \mathbb{S}\big] = p_{\mathrm{inco}}
    \end{align}
    For the specified allowed (with confidence $\alpha$) error, $\theta$, Hoeffding's inequality~\cite{doi:10.1080/01621459.1963.10500830} gives:
    \begin{align}
        \text{prob} \bigg( \bigg \vert 2p_{\mathrm{inco}} - 2 \mathbb{S} \bigg \vert \geq \theta \bigg) &\leq 2 e^{-N_{\textrm{tr}} \theta^2 / 2}
    \end{align}
    Using that $N_{\textrm{tr}} > \dfrac{2}{\theta^2} \ln{\bigg( \dfrac{2}{1 - \alpha} \bigg)}$, as assumed in the theorem statement,
    \begin{align}
        2 e^{-N_{\textrm{tr}} \theta^2 / 2} 
        &<
        1 - \alpha
    \end{align}
   Lemma~\ref{finalBound} then implies the result.
\end{proof}

\section{Proof of Theorem \ref{fullInversionXYTheorem}}
\label{proofOfTimeReversal}

\subsection{Preliminary Lemmas}

\begin{lemma}\( \\ \)
\label{CevitaLemma}
Let $\epsilon_{i, j, k}$ be the Levi-Cevita symbol.
If $p, j, k \in \{1, 2, 3\}$ and $p \not = j \not = k$, then
    $\epsilon_{k,j,p} \epsilon_{j, p, k} = 1$
\end{lemma}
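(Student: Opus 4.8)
The plan is to exploit the fact that the two index tuples $(k,j,p)$ and $(j,p,k)$ are cyclic rearrangements of one another, together with the fact that cyclic permutations of three symbols preserve the value of the Levi-Civita symbol. First I would record that, since $p$, $j$, $k$ are assumed pairwise distinct elements of $\{1,2,3\}$, the tuple $(k,j,p)$ is a genuine permutation of $(1,2,3)$; hence $\epsilon_{k,j,p} \in \{+1,-1\}$, and in particular it is nonzero, and the same holds for $\epsilon_{j,p,k}$.

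Next I would observe that shifting $(k,j,p)$ cyclically to the left by one position returns exactly $(j,p,k)$. Such a cyclic shift of a three-element tuple is a $3$-cycle, which is an even permutation, so it leaves the parity — and therefore the value — of the Levi-Civita symbol unchanged. This yields $\epsilon_{k,j,p} = \epsilon_{j,p,k}$. The product then collapses to $\epsilon_{k,j,p}\,\epsilon_{j,p,k} = \left( \epsilon_{k,j,p} \right)^2$, and since $\epsilon_{k,j,p} = \pm 1$ under the distinctness hypothesis, its square equals $1$, establishing the claim.

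There is no substantive obstacle here, as the statement is elementary; the only point requiring any care is justifying the parity claim that a cyclic shift of three indices is sign-preserving. Should one wish to avoid invoking cyclic invariance of $\epsilon$ as a named property, the identity can equally be verified by exhausting the finitely many distinct assignments of $(p,j,k)$ and checking in each case that both symbols carry the same sign, so that their product is $+1$.
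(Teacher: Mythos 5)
Your proposal is correct and follows essentially the same route as the paper's proof: both establish that the pairwise-distinctness hypothesis forces each symbol to be $\pm 1$, then use cyclic invariance to conclude $\epsilon_{k,j,p} = \epsilon_{j,p,k}$, so the product is a square of $\pm 1$. You merely make the cyclic-invariance step more explicit (a $3$-cycle is an even permutation) where the paper simply cites ``the rules of Levi-Cevita symbols.''
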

\begin{proof} \( \\ \)
    As, by definition, $p \not = j, k$  and $j \not = k$,
    \begin{align}
        \epsilon_{k,j,p}, \epsilon_{j, p, k} \in \{ 1, -1\}
    \end{align}
    Then, by the rules of Levi-Cevita symbols:
    \begin{align}
        \epsilon_{k,j,p} 
        &=
        \epsilon_{j,p,k}
    \end{align}
    Therefore, as $\epsilon_{k,j,p} \in \{ 1, -1\}$,
    \begin{align}
        \epsilon_{k,j,p} \epsilon_{j, p, k} 
        &= 1
    \end{align}
\end{proof}
Lemma \ref{CevitaLemma} is then used in the proof of Lemma \ref{negationTheorem} below.
\begin{lemma}\( \\ \)
\label{negationTheorem}
    If any Pauli matrix is conjugated by any Pauli matrix, it gains a factor of $-1$ if the Pauli matrices are different, and is unaffected if they are the same.
\end{lemma}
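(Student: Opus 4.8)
The plan is to reduce everything to the single Pauli multiplication rule $\sigma_j \sigma_k = \delta_{jk} I + i \sum_l \epsilon_{jkl}\,\sigma_l$, which simultaneously packages the involutivity $\sigma_p^2 = I$ and the structure constants into one identity governed by exactly the Levi-Cevita symbol of Lemma~\ref{CevitaLemma}. Since each Pauli matrix is Hermitian and squares to the identity, conjugation of $\sigma_j$ by $\sigma_p$ is simply $\sigma_p \sigma_j \sigma_p^{\dag} = \sigma_p \sigma_j \sigma_p$, so the entire claim amounts to evaluating this triple product in the two cases $p = j$ and $p \neq j$.

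First I would dispose of the \emph{same} case: when $p = j$, the relation $\sigma_p^2 = I$ gives $\sigma_p \sigma_p \sigma_p = \sigma_p$, so the matrix is unaffected, exactly as claimed. This handles the diagonal of the $3 \times 3$ table of conjugations with no reference to the Levi-Cevita symbol at all.

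Next I would treat the \emph{different} case $p \neq j$ by applying the multiplication rule twice. Writing $k$ for the unique third index in $\{1,2,3\}$, the first product is $\sigma_p \sigma_j = i\,\epsilon_{p,j,k}\,\sigma_k$, and multiplying on the right by $\sigma_p$ gives $\sigma_k \sigma_p = i\,\epsilon_{k,p,j}\,\sigma_j$. Composing these, $\sigma_p \sigma_j \sigma_p = i^2\,\epsilon_{p,j,k}\,\epsilon_{k,p,j}\,\sigma_j = -\,\epsilon_{p,j,k}\,\epsilon_{k,p,j}\,\sigma_j$. Lemma~\ref{CevitaLemma}, once its index labels are matched to the ordering produced here, forces the product of the two Levi-Cevita symbols to equal $1$, leaving $\sigma_p \sigma_j \sigma_p = -\sigma_j$; the conjugated matrix thus picks up precisely the factor $-1$.

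I expect the only delicate step to be this final index bookkeeping: one must line up the orderings appearing in the two successive products, namely $\epsilon_{p,j,k}\,\epsilon_{k,p,j}$, with the particular ordering $\epsilon_{k,j,p}\,\epsilon_{j,p,k}$ asserted in Lemma~\ref{CevitaLemma}, using the cyclic symmetry $\epsilon_{a,b,c} = \epsilon_{b,c,a}$ to reconcile them (each factor being $\pm 1$, so that the cyclically-equal pair squares to $1$). Everything else is routine. Finally I would note that the statement is understood to range over the three non-identity Paulis $\{X, Y, Z\}$, consistent with the index set $\{1,2,3\}$ of Lemma~\ref{CevitaLemma}, with the identity being the trivial degenerate case.
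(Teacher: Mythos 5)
Your proposal is correct and follows essentially the same route as the paper: both reduce the conjugation to two applications of the product rule $\sigma_j\sigma_k = \delta_{jk}I + i\sum_l\epsilon_{jkl}\sigma_l$ and invoke Lemma~\ref{CevitaLemma} to show the resulting pair of Levi-Cevita symbols multiplies to $1$. The only cosmetic difference is that you split the same/different cases explicitly while the paper carries both through a single formula with Kronecker deltas; your index bookkeeping in the ``different'' case checks out.
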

\begin{proof}
    Let $\hat{\sigma}_j$, $\hat{\sigma}_k$ be Pauli matrices, then, by multiplying the Pauli matrices with the standard formula
\be
    \hat{\sigma}^{\dag}_j \hat{\sigma}_k \hat{\sigma}_j
   =
    \delta_{k,j} \hat{\sigma}_j + i \epsilon_{k,j,p} \bigg( \delta_{j,p} \mathcal{I} + i \epsilon_{j, p, q} \hat{\sigma}_q \bigg),
\ee
where $p \not = k,j$ and $q \not = j,p$.
Therefore, $q = k$.
Using these requirements, we further derive:
\begin{align}
\label{beforeLemma}
    \hat{\sigma}^{\dag}_j \hat{\sigma}_k \hat{\sigma}_j
    &=
    \delta_{k,j} \hat{\sigma}_j + i^2 \epsilon_{k,j,p} \epsilon_{j, p, k} \hat{\sigma}_k
\end{align}
Using Lemma \ref{CevitaLemma} in Eqn.~\ref{beforeLemma},
\begin{align}
    \hat{\sigma}^{\dag}_j \hat{\sigma}_k \hat{\sigma}_j
    &=
    \label{longFormPauliConjProduct}
    \delta_{k,j} \hat{\sigma}_j - \big( 1 - \delta_{k,j} \big) \hat{\sigma}_k
\end{align}
Therefore, by re-expressing Eqn.~\ref{longFormPauliConjProduct} more succinctly,
\begin{align}
    \hat{\sigma}^{\dag}_j \hat{\sigma}_k \hat{\sigma}_j &=
    \big( -1 \big)^{1 + \delta_{j,k}} \hat{\sigma}_k
\end{align}
Hence if conjugating Pauli matrix, $\hat{\sigma}$, is the different to the original Pauli matrix, $\hat{\sigma}_k$, then $\hat{\sigma}_k$ gains a factor of $-1$. Otherwise, nothing changes.
\end{proof}

\begin{definition}{$\sigma_k$-exclusion qubit}\( \\ \)
    A Hamiltonian, $\mathcal{H}$, expressed solely in terms of Pauli matrices has a $\sigma_k$-exclusion qubit if there is a qubit, q, such that every term in the Hamiltonian acts non-trivially on q but no term has the specific Pauli, $\sigma_k$, acting on q.
\end{definition}

\begin{lemma}\( \\ \)
    \label{HamiltonianPhase}
    If a Hamiltonian, $\mathcal{H}$, (built from Pauli operators) has a $\sigma_k$-exclusion qubit, assume this qubit has index $j$, for some Pauli matrix, $\sigma_k$, acting on qubit $j$ then,
    \begin{align}
        \hat{\sigma}_k^{\dag} \mathcal{H}\hat{\sigma}_k
        &=
        - \mathcal{H}
    \end{align}
    \underline{Example}\\
    The Hamiltonian $\mathcal{H} = X_1 X_2 + Y_1 Y_2$ is inverted as above when $\hat{\sigma}_k = Z_1$, or $\hat{\sigma}_k = Z_2$
\end{lemma}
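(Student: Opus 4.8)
The plan is to prove Lemma~\ref{HamiltonianPhase} by leveraging Lemma~\ref{negationTheorem} term-by-term across the Hamiltonian. First I would write $\mathcal{H}$ in its Pauli decomposition, $\mathcal{H} = \sum_a c_a T_a$, where each $T_a$ is a tensor product of Pauli operators and each coefficient $c_a$ is real. The key structural fact supplied by the hypothesis is that qubit $j$ is a $\sigma_k$-exclusion qubit: every term $T_a$ acts non-trivially on qubit $j$, and on that qubit the factor is some Pauli \emph{different} from $\sigma_k$. This is exactly the situation Lemma~\ref{negationTheorem} is built to handle.

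Next I would conjugate $\mathcal{H}$ by $\hat{\sigma}_k$ acting on qubit $j$, and distribute the conjugation into the sum by linearity. For each term $T_a$, the conjugation $\hat{\sigma}_k^{\dag} T_a \hat{\sigma}_k$ factorises across tensor factors: on every qubit other than $j$ the operator $\hat{\sigma}_k$ acts as the identity and leaves the Pauli factor untouched, so only the factor on qubit $j$ is affected. On qubit $j$, the Pauli factor of $T_a$ is different from $\sigma_k$ by the exclusion property, so Lemma~\ref{negationTheorem} tells us conjugation by $\hat{\sigma}_k$ introduces exactly a factor of $-1$. Hence $\hat{\sigma}_k^{\dag} T_a \hat{\sigma}_k = - T_a$ for \emph{every} term uniformly. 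Summing back over $a$ with the real coefficients $c_a$ gives
\begin{align}
    \hat{\sigma}_k^{\dag} \mathcal{H} \hat{\sigma}_k = \sum_a c_a \, \hat{\sigma}_k^{\dag} T_a \hat{\sigma}_k = - \sum_a c_a T_a = -\mathcal{H},
\end{align}
which is precisely the claim.

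The main subtlety to be careful about is not a hard obstacle but a bookkeeping point: I must ensure the $-1$ appears for \emph{every} term, which hinges entirely on the exclusion hypothesis guaranteeing that the qubit-$j$ Pauli factor of each term is genuinely different from $\sigma_k$, so Lemma~\ref{negationTheorem} never lands in its ``same Pauli, no sign change'' case. If even one term carried $\sigma_k$ on qubit $j$, that term would be fixed rather than negated and the uniform sign flip would fail; the definition of a $\sigma_k$-exclusion qubit is exactly what rules this out. A secondary point worth noting is that every term must act non-trivially on qubit $j$ (not as the identity there), since an identity factor on qubit $j$ would also be unaffected by conjugation; the definition supplies this too. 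Once these two conditions are in place, the proof is a direct linear computation and the stated example ($\mathcal{H} = X_1 X_2 + Y_1 Y_2$ inverted by $Z_1$ or $Z_2$) follows immediately, since $Z$ differs from both $X$ and $Y$ on the exclusion qubit.
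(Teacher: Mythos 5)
Your proof is correct and follows essentially the same route as the paper's: decompose $\mathcal{H}$ term by term, observe that conjugation by $\hat{\sigma}_k$ on qubit $j$ only touches the qubit-$j$ Pauli factor of each term, invoke Lemma~\ref{negationTheorem} together with the exclusion property to get a uniform factor of $-1$, and sum back up by linearity. Your explicit flagging of the two places the exclusion hypothesis is used (every term acts non-trivially on qubit $j$, and never with $\sigma_k$ itself) is a slightly more careful bookkeeping of the same argument.
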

\begin{proof}
$\mathcal{H}$ has a $\sigma_k$-exclusion qubit, by assumption, and assume this is the qubit with index $j$. Hence if $\mathcal{H}$ is conjugated by $\hat{\sigma}_k$  acting on qubit $j$, this can be rewritten as for each term in $\mathcal{H}$ (by linearity), $\hat{\sigma}_k$ and $\hat{\sigma}^{\dag}_k$ coming from each side of the term, commuting with every Pauli operator in the term until they enclose just the single-qubit Pauli operator affecting qubit $j$.\\
As qubit $j$ is a $\sigma_k$-exclusion qubit, the Pauli acting on qubit $j$ in any term in $\mathcal{H}$ will not be $\sigma_k$. Hence, using Lemma \ref{negationTheorem}, the effect of the conjugation is to add an overall factor of $-1$ to the term.
As this happens for every term, an overall phase of $-1$ is added to $\mathcal{H}$.
\end{proof}
The Hamiltonian can be conjugated by as many Pauli operators, on different qubits as required, and each will still produce the factor of $-1$.

\subsection{A Graph Theory Aside}
To invert a time evolution where there are many XY-interaction terms we will need to start with a slight diversion into graph theory.
\begin{definition}{A Graph}\( \\ \)
    A graph, $\mathcal{G} = (V, E)$, is a set of two sets: $V$ and $E$.
    Elements of $V$ are referred to as vertices and each element in $E$ (referred to as an edge) is a double of elements in $V$.
\end{definition}

\begin{definition}{k-Colourability}\( \\ \)
\label{kColDef}
    A graph, $\mathcal{G}$, is k-colourable if and only if there exists a mapping, $\chi$: $V \longrightarrow$ $\{1, 2, ..., k\}$ such that:
    \begin{align}
        &\forall (v_1, v_2) \in E,\\
        &\chi(v_1) \neq  \chi(v_2)
    \end{align}
\end{definition}
\begin{definition}{Chromatic Subsets, $\Tilde{\mathcal{V}}_j$}\( \\ \)
\label{cromatSetDef}
Define the chromatic subsets for a graph $\mathcal{G}$ by:
\begin{align}
    \Tilde{V}_j &= \big \{ v \in V \vert \chi(v) = j \big \}
\end{align}
Where the underlying graph, $\mathcal{G}$, has been k-coloured for some integer $k \geq j$, and $\chi$ is the mapping as in Definition \ref{kColDef}.
\end{definition}

\begin{lemma}\( \\ \)
    \label{OneInEach}
    If a graph, $\mathcal{G} = (V, E)$, is two-coloured, every edge in $\mathcal{G}$ has exactly one endpoint in each chromatic subset. 
\end{lemma}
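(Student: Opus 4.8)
The plan is to argue directly from the definitions of two-colourability (Definition~\ref{kColDef} specialised to $k=2$) and of the chromatic subsets (Definition~\ref{cromatSetDef}), with no machinery beyond a one-line counting step. First I would fix an arbitrary edge $(v_1, v_2) \in E$ and invoke the colouring map $\chi : V \to \{1,2\}$ that exists by hypothesis because $\mathcal{G}$ is two-coloured. The defining property of a proper colouring (Definition~\ref{kColDef}) immediately gives $\chi(v_1) \neq \chi(v_2)$.

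Next I would perform a pigeonhole-style case analysis on the two available colours. Since $\chi$ takes values in the two-element set $\{1,2\}$ and the two endpoints are assigned \emph{distinct} values, the only remaining possibility is that one endpoint receives colour $1$ and the other receives colour $2$. By the definition of the chromatic subsets $\Tilde{V}_1$ and $\Tilde{V}_2$, this says precisely that one endpoint of the edge lies in $\Tilde{V}_1$ and the other lies in $\Tilde{V}_2$. To upgrade this to the word \emph{exactly}, I would note that $\chi$ is a function, so $\Tilde{V}_1 \cap \Tilde{V}_2 = \varnothing$; hence no endpoint can belong to both subsets. Combining the two observations yields exactly one endpoint of $(v_1, v_2)$ in each chromatic subset, and since the edge was arbitrary the conclusion holds for every edge of $\mathcal{G}$.

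I do not anticipate any genuine obstacle: the lemma is essentially the two-colouring condition re-read through the notation of chromatic subsets. The only point deserving explicit statement—rather than being glossed over—is the elementary counting argument that \emph{two distinct values drawn from a two-element set must be the two distinct elements}, which is what justifies the quantifier ``exactly one'' as opposed to merely ``at least one'' in the lemma. Everything else is unwinding definitions.
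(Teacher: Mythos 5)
Your proposal is correct and is simply a fully spelled-out version of the paper's own one-line proof, which states that the lemma ``follows from the definition of a chromatic set of a two colouring.'' You unwind exactly the intended definitions (the properness condition $\chi(v_1)\neq\chi(v_2)$, the pigeonhole on $\{1,2\}$, and the disjointness of $\Tilde{V}_1$ and $\Tilde{V}_2$), so the approach is the same, just made explicit.
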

\begin{proof}
    Follows from the definition of a chromatic set of a two colouring.
\end{proof}
\subsection{Inversion of Relevant Hamiltonians}
The notion of graphs is useful for our purposes due to a mapping between graphs and Hamiltonians that involve two body interactions only.
This mapping is not bijective: it is invariant under varying non-zero interaction strengths in the Hamiltonian, but this gives the mapping general applicability.
The mapping is defined by identifying a vertex with each qubit the Hamiltonian affects and if there is a two-body interaction between any two qubits (in the Hamiltonian) adding an edge between the corresponding vertices.

\begin{proof}[Proof of Theorem~\ref{fullInversionXYTheorem}]
    As shown in Lemma \ref{OneInEach}, every edge in the graph generated by the Hamiltonian is adjacent to exactly one qubit in the chosen chromatic set, $\Tilde{\mathcal{V}}_1$. Considering a general XY interaction between qubits $j$ and $k$:
    \begin{align}
        X_j X_k + Y_j Y_k
    \end{align}
    Assume WLOG, $j \in \Tilde{\mathcal{V}}_1$. Then, via Lemma \ref{OneInEach}, $k \not \in \Tilde{\mathcal{V}}_1$.\\
    So the only effect of conjugating every qubit in $\Tilde{\mathcal{V}}_1$ with a Pauli $Z$ operator is conjugating qubit $j$ with a Pauli $Z$ gate:
    \begin{align}
        Z_j\bigg( X_j X_k + Y_j Y_k \bigg) Z_j^{\dag}
        &=
        Z_j X_j Z_j^{\dag} X_k + Z_j Y_j Z_j^{\dag} Y_k
    \end{align}
    Then using Lemma \ref{negationTheorem}:
    \begin{align}
        Z_j\bigg( X_j X_k + Y_j Y_k \bigg) Z_j^{\dag}
        &=
        - \big( X_j X_k + Y_j Y_k \big)
    \end{align}
    If $\langle j, k \rangle$ denotes the set of all pairs $(j,k)$ such that vertices indexed by $j$ and $k$ are neighbours on the graph, the Hamiltonian, $\mathcal{H}$, can be expressed as:
    \begin{align}
        \mathcal{H} 
        &=
        \sum_{\langle j, k \rangle} \bigg( X_jX_k + Y_jY_k \bigg)
    \end{align}
    If we then apply the conjugation on the chromatic set, the previous consideration of single interactions implies:
    \begin{align}
        \prod_{q \in \Tilde{\mathcal{V}}_1} \bigg( Z_q \bigg) \mathcal{H}
        \prod_{q \in \Tilde{\mathcal{V}}_1} \bigg( Z_q\bigg)
        &=
        - \mathcal{H}
    \end{align}
\end{proof}
One such example of the above kind of Hamiltonian is the XY-interaction on a square lattice.

\subsection{Inverting Accreditable Hamiltonians}
\label{InvertXYSubsection}
We first define what we mean by inversion of a time evolution.
\begin{definition}\( \\ \)
    Any operations that map a time evolution, $e^{-i \mathcal{H} t}$, to its reversed time evolution, $e^{i \mathcal{H} t}$, is said to invert the time evolution.
\end{definition}

Before proceeding, we need to prove Lemma \ref{unitExp} from the main text:
\begin{proof}[Proof of Lemma \ref{unitExp}]
    Starting from the Taylor expansion of the time evolution operator
    \be
        e^{- \kappa \mathcal{H} t}
        =
        \label{expTaylorSeries}
        \sum^{\infty}_{j = 1} \bigg( \dfrac{ \big( - \kappa \mathcal{H} t \big)^j}{j!}  \bigg),
    \ee
    the result follows from the conjugation of Eqn. \ref{expTaylorSeries} by the unitary, $\mathcal{U}$,
    \be
        \mathcal{U} e^{- \kappa \mathcal{H} t} \mathcal{U}^{\dag}
        =
        \mathcal{U} \sum^{\infty}_{j = 1} \bigg( \dfrac{ \big( - \kappa \mathcal{H} t \big)^j}{j!}  \bigg) \mathcal{U}^{\dag}
        =
        \sum^{\infty}_{j = 1} \bigg( \dfrac{ \big( - \kappa \mathcal{U} \mathcal{H} \mathcal{U}^{\dag} t \big)^j}{j!} \bigg).
    \ee
\end{proof}

This is then used in the proof of Theorem \ref{fullTimeInversion}

\begin{theorem}\( \\ \)
    \label{fullTimeInversion}
    If $\mathcal{H}$ is a Hamiltonian consisting of XY-interactions on a square lattice, conjugating the time evolution according to $\mathcal{H}$ (for any duration) with Pauli $Z$ gates on any chromatic set of the square lattice inverts the time evolution.
\end{theorem}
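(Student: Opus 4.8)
The plan is to recognise that this theorem is an immediate specialisation of the graph-theoretic machinery already assembled, and that the only genuinely new input needed is the two-colourability of the square lattice. First I would exhibit the standard checkerboard colouring: assign to each vertex the parity of the sum of its integer lattice coordinates. Since nearest neighbours on a square lattice differ in exactly one coordinate by one, this map $\chi$ sends the two endpoints of every edge to opposite colours, so it satisfies Definition \ref{kColDef} with $k=2$. Thus the graph associated (via the vertex-per-qubit, edge-per-interaction correspondence) to the XY-interactions on a square lattice of any size is two-colourable, and the hypotheses of Theorem \ref{fullInversionXYTheorem} are met.

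Next I would fix one colour class as the chromatic subset $\tilde{V}_1$ and set $\mathcal{C} = \prod_{q \in \tilde{V}_1} Z_q$. Theorem \ref{fullInversionXYTheorem} then applies verbatim, yielding $\mathcal{C}\,\mathcal{H}\,\mathcal{C}^{\dagger} = -\mathcal{H}$; crucially this holds for the accreditable Hamiltonian with arbitrary coefficients $J_{i,j}$, because the conjugation acts term-by-term by linearity and each $X_jX_k + Y_jY_k$ flips sign regardless of its prefactor. In other words, $\mathcal{C}$ is exactly a time-inversion circuit conforming to Definition \ref{timeReverseCircDef}.

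It then remains only to promote the Hamiltonian-level identity to the level of the time-evolution operator. This is precisely the content of Lemma \ref{InvertTime}, which itself rests on Lemma \ref{unitExp}: conjugating $e^{-i\mathcal{H}t}$ by $\mathcal{C}$ passes the conjugation through the Taylor series and replaces $\mathcal{H}$ by $\mathcal{C}\mathcal{H}\mathcal{C}^{\dagger} = -\mathcal{H}$, so that $\mathcal{C}\, e^{-i\mathcal{H}t}\, \mathcal{C}^{\dagger} = e^{i\mathcal{H}t}$. By the definition of inverting a time evolution, this is exactly the asserted claim.

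I do not anticipate any serious obstacle, since every structural ingredient is already established upstream; the argument is a specialisation of Theorem \ref{fullInversionXYTheorem} to a concrete lattice followed by a single invocation of Lemma \ref{InvertTime}. The only points deserving explicit care are verifying two-colourability uniformly for arbitrarily large square lattices and being careful that the conjugating operator $\mathcal{C}$ furnished by Theorem \ref{fullInversionXYTheorem} is literally the same operator whose action on the exponential Lemma \ref{InvertTime} governs; both checks are routine.
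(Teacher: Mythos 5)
Your proposal is correct and follows essentially the same route as the paper: conjugation passes through the exponential via Lemma~\ref{unitExp} (equivalently Lemma~\ref{InvertTime}), and Theorem~\ref{fullInversionXYTheorem} supplies $\mathcal{C}\mathcal{H}\mathcal{C}^{\dag}=-\mathcal{H}$ for the chromatic-set Pauli-$Z$ circuit. Your explicit verification of the checkerboard two-colouring is a harmless addition the paper defers to Lemma~\ref{SNis2Colour}, since the theorem statement already presupposes a chromatic set is given.
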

\begin{proof}
    As applying a Pauli $Z$ to every qubit in a chromatic set, $\Tilde{\mathcal{V}}$, is a unitary, denoting $\mathcal{C} = \prod_{j \in \Tilde{\mathcal{V}}} \big( Z_j \big)$, Lemma \ref{unitExp} implies,
    for any time duration, $t \in \mathbb{R}$:
    \begin{align}
        \mathcal{C} e^{-i \mathcal{H} t} \mathcal{C}^{\dag}
        &=
        e^{-i \mathcal{C} \mathcal{H}\mathcal{C}^{\dag} t}
    \end{align}
    Then Theorem \ref{fullInversionXYTheorem} implies:
    \begin{align}
        \mathcal{C} e^{-i \mathcal{H} t} \mathcal{C}^{\dag}
        &=
        e^{i \mathcal{H} t}
    \end{align}
    This is the inverted time evolution.
\end{proof}

\begin{definition}{Checkerboard Set on a Square Lattice}\( \\ \)
\label{CheckDef}
For a $N_l \times N_l$ square lattice, index the qubits from the top left, starting from 1, working along each row, then moving to the next one down, always starting from the left of a row.\\
If $N_l$ is odd, its checkerboard set, $\mathcal{S}(N_l)$, is:
\begin{align}
    \big \{ 1 \leq j \leq N_l^2 \big \vert\textit{ } j \textit{ is odd } \big \}
\end{align}
If $N_l$ is even, find $\mathcal{S}(N_l+1)$ and remove the rightmost column and bottom row.
\end{definition}
\begin{lemma}\( \\ \)
    \label{SNis2Colour}
    $\forall N_l \in \mathbb{Z}^+$, $\mathcal{S}(N_l)$ is a chromatic set for a 2-colouring.
\end{lemma}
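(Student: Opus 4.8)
The plan is to exhibit an explicit 2-colouring of the square lattice for which $\mathcal{S}(N_l)$ is precisely one of the two colour classes, so that it qualifies as a chromatic subset in the sense of Definition~\ref{cromatSetDef}. The natural candidate is the standard checkerboard colouring $\chi(r,c) = (r+c) \bmod 2$, where $(r,c)$ denotes the position of a qubit in row $r$ and column $c$ (both indexed from $1$). This $\chi$ is a proper 2-colouring because every edge of the square lattice joins two sites differing by $1$ in exactly one of $r$ or $c$, which flips the parity of $r+c$; thus adjacent sites always receive different colours, satisfying Definition~\ref{kColDef}. It then suffices to show that $\mathcal{S}(N_l)$ coincides, as a set of lattice sites, with the colour class $\{(r,c) : r+c \text{ odd}\}$.

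First I would treat the case of odd $N_l$ directly. The linear index of site $(r,c)$ is $j = (r-1)N_l + c$. When $N_l$ is odd, $(r-1)N_l$ has the same parity as $r-1$, so $j \equiv (r-1)+c \equiv r+c-1 \pmod 2$; hence $j$ is odd exactly when $r+c$ is odd. Therefore $\mathcal{S}(N_l) = \{\,j : j \text{ odd}\,\}$ is exactly the colour class $\{(r,c):r+c \text{ odd}\}$, which establishes the claim for odd $N_l$.

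The even case is where care is needed, and it is the main obstacle. For even $N_l$ the quantity $(r-1)N_l$ is always even, so $j \equiv c \pmod 2$, and the ``odd index'' rule would select all odd \emph{columns} regardless of row; this is not a proper colouring (vertically adjacent sites in a fixed odd column would share a colour), which is precisely why Definition~\ref{CheckDef} handles even $N_l$ by reduction to the odd lattice. I would argue as follows: since $N_l+1$ is odd, the previous paragraph gives that $\mathcal{S}(N_l+1)$ is the set $\{(r,c) : 1 \le r,c \le N_l+1,\ r+c \text{ odd}\}$ on the $(N_l+1)\times(N_l+1)$ lattice. Deleting the rightmost column ($c = N_l+1$) and the bottom row ($r = N_l+1$), as the definition prescribes, leaves exactly the top-left $N_l \times N_l$ block together with the sites $\{(r,c): 1\le r,c \le N_l,\ r+c \text{ odd}\}$. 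Viewed as the $N_l \times N_l$ lattice, this is again one colour class of the checkerboard colouring $\chi$, completing the proof. The only bookkeeping to watch is the re-indexing of the retained sites and the check that deleting a full boundary row and column does not disturb the parity structure of the interior; once the argument is phrased in terms of positions $(r,c)$ rather than linear indices, this is immediate.
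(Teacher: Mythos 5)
Your proof is correct and follows essentially the same route as the paper's: both reduce the even case to the odd one via the deletion prescribed in Definition~\ref{CheckDef}, and then observe that for odd side length with row-major indexing every edge of the lattice joins sites whose linear indices (equivalently, whose coordinate sums $r+c$) have opposite parity, so the odd-index sites form one colour class of the standard checkerboard colouring. One small slip: since $j=(r-1)N_l+c\equiv r+c-1 \pmod 2$ for odd $N_l$, the odd-index sites are those with $r+c$ \emph{even}, not odd --- but as either parity class is a chromatic subset in the sense of Definition~\ref{cromatSetDef}, this does not affect the conclusion.
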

\begin{proof}
    Removing vertices from a graph (and their adjacent edges), leaves chromatic sets invariant, if the removed vertices (from the graph) are also removed from the chromatic sets.\\
    Hence we can assume $N_l$ is odd and only consider 2-colouring the degree $4$ vertices (which are identically connected) of a $(N_l+2) \times (N_l+2)$ square lattice.\\
    If a vertex has index $j$, then, assuming it is a degree $4$ vertex, it is adjacent to the vertices with indices: $j - 1$, $j + 1$, $j - N_l$, $j + N_l$.\\
    As both $1$ and $N_l$ are odd, $j$ has a different parity to $j - 1$, $j + 1$, $j - N_l$, and $j + N_l$.\\
    Therefore, $j$ is in $\mathcal{S}(N_l)$ if and only if all vertices it is adjacent to are not in $\mathcal{S}(N_l)$. This is true for any vertex so $\mathcal{S}(N_l)$ is a chromatic set for a 2-colouring of a $N_l \times N_l$ square lattice.
\end{proof}
\subsection{Proof of Theorem \ref{timeReversalExistTheorem}}

\begin{theorem}
\label{timeReversalExistTheorem}
    For any accreditable Hamiltonian, a time inversion circuit exists and can be built by applying a Pauli Z to every qubit in the  checkerboard set corresponding to the Hamiltonian's underlying square lattice.
\end{theorem}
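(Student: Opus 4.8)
The plan is to reduce the claim to the operator identity $\mathcal{C}\mathcal{H}\mathcal{C}^{\dag} = -\mathcal{H}$ of Definition~\ref{timeReverseCircDef}, since Lemma~\ref{InvertTime} already guarantees that any $\mathcal{C}$ satisfying this identity inverts the full time evolution $e^{-i\mathcal{H}t}$. By Definition~\ref{accreditableHamDef}, an accreditable Hamiltonian is a sum of XY-interaction terms whose underlying interaction graph is a square lattice, so it suffices to exhibit an explicit operator $\mathcal{C}$ of the claimed form — a product of Pauli $Z$ gates on the checkerboard set — that realises this negation.

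First I would establish that the square lattice is two-colourable and that the checkerboard set of Definition~\ref{CheckDef} is a legitimate chromatic subset of such a colouring. This is precisely the content of Lemma~\ref{SNis2Colour}, which I would invoke directly: for every $N_l$, the set $\mathcal{S}(N_l)$ arises as one colour class of a valid $2$-colouring $\chi$ of the $N_l \times N_l$ lattice. The only care required here is the even/odd case split built into the definition of $\mathcal{S}(N_l)$, but this distinction is already absorbed within Lemma~\ref{SNis2Colour}.

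With the lattice shown to be two-colourable and the checkerboard set identified as a chromatic subset, I would then apply Theorem~\ref{fullInversionXYTheorem} essentially verbatim. Since the qubits and XY-interactions of the accreditable Hamiltonian form a two-colourable graph, conjugation by Pauli $Z$ gates on a chromatic subset sends $\mathcal{H}$ to $-\mathcal{H}$. Setting $\mathcal{C} = \prod_{q \in \mathcal{S}(N_l)} Z_q$ then satisfies Definition~\ref{timeReverseCircDef}, and Lemma~\ref{InvertTime} closes the argument, confirming that this $\mathcal{C}$ inverts the corresponding time evolution.

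I do not anticipate a genuine obstacle, as the heavy lifting has already been done upstream: Theorem~\ref{fullInversionXYTheorem} supplies the operator identity for any two-colourable interaction graph, and Lemma~\ref{SNis2Colour} supplies the explicit chromatic subset for the square lattice. The most delicate point — and the one I would check most carefully — is simply that the checkerboard construction yields a \emph{proper} two-colouring for \emph{all} lattice sizes, including the reduction from even to odd $N_l$ prescribed in Definition~\ref{CheckDef}; everything else is a direct composition of the earlier results.
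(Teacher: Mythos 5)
Your proposal is correct and follows essentially the same route as the paper: invoke Lemma~\ref{SNis2Colour} to identify the checkerboard set as a chromatic subset of a $2$-colouring of the square lattice, then apply Theorem~\ref{fullInversionXYTheorem} to get $\mathcal{C}\mathcal{H}\mathcal{C}^{\dag} = -\mathcal{H}$, and conclude via the time-evolution conjugation identity. The only cosmetic difference is that the paper packages the last two steps into Theorem~\ref{fullTimeInversion}, whereas you cite Theorem~\ref{fullInversionXYTheorem} and Lemma~\ref{InvertTime} separately; the underlying argument is identical.
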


\begin{proof}
As in Definition \ref{accreditableHamDef}, accreditable Hamiltonians consist of the XY-interaction on a 2D square lattice.\\
Then Lemma \ref{SNis2Colour} implies the Hamiltonian's underlying checkerboard set is a chromatic set of the underlying graph.\\
Theorem \ref{fullTimeInversion} implies that this is sufficient for applying a Pauli Z operator to every qubit in the Hamiltonian's underlying checkerboard set to constitute a time inversion circuit.
\end{proof}

\section{Time Inversion Examples}
\label{ExamplesAppendix}

\subsection{Simple 2 qubit System}
One simple example of a  Hamiltonian that can be inverted is:
\begin{align}
    \label{SingleXYInteractHamil}
    \mathcal{H} 
    &=
    X_1X_2 + Y_1 Y_2
\end{align}
In this case, Lemma \ref{HamiltonianPhase} tells us that conjugation with either $Z_1$ or $Z_2$ adds a factor of $-1$:
\begin{align}
    Z_1\mathcal{H} Z_1
    &=
    -Z_1 \cdot Z_1 X_1X_2 - Z_1 \cdot Z_1Y_1Y_2\\
    &= 
    \label{refTimeReverse}
    - \mathcal{H}
\end{align}
The time evolution according to Hamiltonian, $\mathcal{H}$ (as above), for time, $t$, can then be conjugated by $Z_1$. Hence, Lemma \ref{unitExp} and Eqn.~\ref{refTimeReverse} imply:
\begin{align}
        Z_1 e^{-i \mathcal{H} t} Z_1^{\dag}
        &=
        e^{i \mathcal{H} t}
    \end{align}
    We see that the conjugation of the time evolution by $Z_1$ has reversed the time evolution's direction.\\

    Clearly this example is not on a square lattice and is not directly used for our accreditation protocol. It does however provide a simple illustrative example and also shows these techniques to have a wider applicability.
\subsection{XY Model on a 3 by 3 Square Lattice}
\subsubsection{Derivation of a Time Inversion Circuit}
We can in fact invert the full XY-model, which differs from the XY-interaction by adding onsite terms (Pauli $Z$ gates with real coefficents on every qubit).\\
This is a wider application of time inversion circuits and although similar to what is required in the accreditation protocol, this is the XY-model. The accreditable Hamiltonians consist entirely of the XY-interaction i.e. have no onsite terms. \\

For our example, we choose the XY-model (that is, the XY interaction with on-site terms consisting of a Pauli Z on each site) on a 3x3 square lattice, with nearest neighbour interactions.\\
The sites in the lattice can be indexed as in Fig. \ref{LatticePic}. The edges of the graph then denote the interactions, which helps clarify the model.
\begin{figure}
    \centering
    \includegraphics{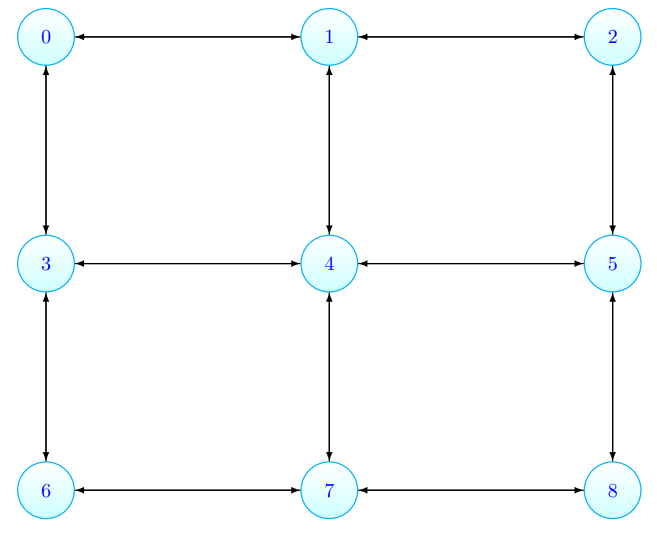}
    \caption{A $3 \times 3$ square lattice with an indexing on the vertices}
    \label{LatticePic}
\end{figure}
Using this labelling, we can define the Hamiltonian of the system:
\begin{align}
    \mathcal{H} 
    &= 
    \sum_{\langle j,k \rangle } \bigg( J_{i,j} \bigg[ X_{j} X_{k} + Y_{j} Y_{k} \bigg] \bigg) 
    + U \sum_{j = 0}^{8} \bigg( Z_j \bigg)
\end{align}
Where,\\
$\bullet$ $\langle j,k\rangle$ denotes any $j, k$ such that the nodes labelled $j$ and $k$ are neighbours.\\
$\bullet$ $\forall i,j \in \langle j,k\rangle$, $J_{i,j} \in \mathbb{R}$.\\
$\bullet$ $ U \in \mathbb{R}$.\\
As they end up not affecting how the system gets inverted, we are free to entirely neglect $J_{i,j}$ and $U$ i.e. set each to $1$. Hence, the Hamiltonian can be considered to be:
\begin{align}
    \mathcal{H} 
    &= 
    \sum_{\langle j,k \rangle } \bigg( X_{j} X_{k} + Y_{j} Y_{k} \bigg) 
    + \sum_{j = 0}^{8} \bigg( Z_j \bigg)
\end{align}
Where $\langle j,k\rangle$ denotes any indices $j, k$ such that the nodes labelled $j$ and $k$ are neighbours.\\

If we can find two circuits: one which acts as a time inversion circuit for the first summation in the Hamiltonian (i.e. the XY interaction terms) and commutes with the second summation (i.e. the on-site terms), and the other that commutes with the first term but acts as a time inversion circuit for the second summation in the Hamiltonian; then the composition of the two circuits acts as a time inversion circuit for the entire Hamiltonian i.e. each part of the Hamiltonian is inverted by a different circuit.\\

\noindent \underline{Inverting the XY Interactions}\\
Using Theorem \ref{fullInversionXYTheorem}, a time inversion circuit for the XY interactions on a square lattice can be formed by 2-colouring the lattice and applying a Pauli $Z$ gate on each site in one chromatic set.\\
This will not affect the onsite terms (i.e. the other summation in the Hamiltonian) as this circuit is formed entirely of Pauli $Z$ gates and so are the onsite terms, so they commute.\\
 To 2-colour the graph in Fig. \ref{LatticePic} (which we henceforth refer to as $\mathcal{G}$), as is required to invert the Hamiltonian acting on it, split the graph into the below defined subgraph, $\mathcal{S}$, and it is compliment (on the graph).
\begin{align}
    \mathcal{S} &= \{ x \in \mathcal{V}(\mathcal{G}) \vert  \mathcal{V}(\mathcal{G}) \textit{ is odd}  \}
\end{align}
By inspection of the graph in Fig. \ref{LatticePic}, we can see $\mathcal{S}$ 2-colours it. It is in fact a checkerboard set as in Definition \ref{CheckDef}.\\

\noindent \underline{Inverting the Onsite $Z$ terms}\\
If a time inversion circuit for the onsite terms is constructed of a Pauli $X$ gate on every site in the Hamiltonian, then each term in the second summation in the Hamiltonian gains a factor on minus one due to Lemma \ref{negationTheorem} i.e. the onsite terms are inverted.\\
Additionally, it can easily be checked that:
\begin{align}
    \bigg[ X_1X_2, X_1X_2 + Y_1Y_2 \bigg]
    &=
    0
\end{align}
So conjugation by Pauli $X$ gates has no effect on the interaction terms.\\

\noindent \underline{Inverting the Entire Hamiltonian}\\
With a time inversion circuit for each summation in the Hamiltonian, each of which has no effect on the summation it does not invert, we can construct a time inversion circuit for the entire Hamiltonian composing the two circuits i.e.
\begin{align}
    \mathcal{C} 
    &=
    \prod_{k \in \mathcal{S}} \bigg( Z_k \bigg) \prod_{k = 0}^8 \bigg( X_k \bigg)
\end{align}
This construction is the most useful from the perspective of our analysis. It is not, however, the most efficient implementation.\\
If we simplify via Pauli matrix multiplication:
\begin{align}
    \mathcal{C} 
    &=
    \prod_{k \in \mathcal{S}} \bigg( iY_k \bigg) \prod_{k \not \in \mathcal{S}} \bigg( X_k \bigg)
\end{align}

\subsubsection{Demonstration of Time Evolution Inversion}
We now show that the circuit achieves inversions as intended.
\begin{align}
\label{InitialEffect}
    \mathcal{C}^{\dagger} e^{-i \mathcal{H} t} \mathcal{C}
    &=
    e^{-i  \mathcal{C}^{\dagger} \mathcal{H} \mathcal{C} t}
\end{align}
Then, examining just $\mathcal{C}^{\dagger} \mathcal{H} \mathcal{C}$:
\begin{align}
    \mathcal{C}^{\dagger} \mathcal{H} \mathcal{C}
    &=
     \prod_{k = 0}^8 \bigg( X_k \bigg) \prod_{k \in \mathcal{S}} \bigg( Z_k \bigg) \bigg[  \sum_{\langle j,k \rangle } \bigg( X_{j} X_{k} + Y_{j} Y_{k} \bigg) 
    \bigg] \prod_{k \in \mathcal{S}} \bigg( Z_k \bigg) \prod_{k = 0}^8 \bigg( X_k \bigg)
    +
    \sum_{j = 0}^{8} \bigg( X_jZ_jX_j \bigg)
\end{align}
Due to the 2-colouring of the interaction graph, one of every node involved in each interaction is in the sub-graph generated by the 2-colouring. For ease, assume that in each case the node in the sub-graph, $\mathcal{S}$, is the one labelled $k$.
\begin{align}
    \mathcal{C}^{\dagger} \mathcal{H} \mathcal{C}
    &=
    \sum_{\langle j,k \rangle } \bigg( - X_j X_{j} X_j X_k X_{k} X_k - X_j Y_{j} X_j X_k Y_{k} X_k \bigg)
    +
    \sum_{j = 0}^{8} \bigg( - Z_j\bigg)\\
    &= 
    \label{finalInversionExample}
    - \mathcal{H}
\end{align}
Using Eqn.~\ref{finalInversionExample} in Eqn.~\ref{InitialEffect}:
\begin{align}
    \mathcal{C}^{\dagger} e^{-i \mathcal{H} t} \mathcal{C}
    &=
    e^{i \mathcal{H} t}
\end{align}
So the overall effect is an inversion of the time evolution.

\end{document}